\documentclass[11pt]{article}

\usepackage{amssymb,amsmath,amsthm,mathtools, natbib, booktabs, float}
\usepackage{array,geometry}  
\usepackage{setspace}
\usepackage{bm,paralist,dsfont,nicefrac,bbm}
\usepackage{etoolbox}
\usepackage[driverfallback=hypertex,pagebackref=true,colorlinks]{hyperref}
\hypersetup{linkcolor=[rgb]{.7,0,0}}
\hypersetup{citecolor=[rgb]{0,.7,0}}
\hypersetup{urlcolor=[rgb]{.7,0,.7}}
\renewcommand*{\backref}[1]{}
\renewcommand*{\backrefalt}[4]{%
    \ifcase #1 (Not cited.)%
    \or        (Cited on page~#2)%
    \else      (Cited on pages~#2)%
    \fi}

\makeatletter
\newcommand{\pushright}[1]{\ifmeasuring@#1\else\omit\hfill$\displaystyle#1$\fi\ignorespaces}
\newcommand{\pushleft}[1]{\ifmeasuring@#1\else\omit$\displaystyle#1$\hfill\fi\ignorespaces}
\makeatother

\usepackage{xcolor}

\newcommand{\BibTeX}{\rm B\kern-.05em{\sc i\kern-.025em b}\kern-.08em\TeX}

\usepackage{array, enumitem, mathtools, tabularx, hyperref}

\usepackage{thmtools}
\usepackage{thm-restate}

\makeatletter
\@ifundefined{newcounteralias}{}{%
  \renewcommand\thmt@autorefsetup{%
    \@xa\def\csname\thmt@envname autorefname\@xa\endcsname%
    \@xa{\thmt@thmname}%
  }%
}
\makeatother

\usepackage[capitalize, nameinlink, noabbrev]{cleveref}
	\crefalias{section}{section}%
	\crefalias{subsection}{section}

\Crefname{lemma}{Lemma}{Lemmas}
\newtheorem{theorem}{Theorem}[section]
\newtheorem{lemma}{Lemma}[section]
\newtheorem{corollary}{Corollary}[section]

\newcommand{\A}{\mathcal A}
\newcommand{\BoBW}{\mathrm{BoBW}}

\newcommand{\eps}{\varepsilon}

\newcommand{\EFone}{\mathrm{EF1}}
\newcommand{\EFX}{\mathrm{EFX}}
\newcommand{\EFable}{\mathrm{EFable}}
\newcommand{\fPO}{\mathrm{fPO}}
\newcommand{\F}{\mathcal F}

\newcommand{\one}{\mathbf 1}
\newcommand{\PO}{\mathrm{PO}}
\newcommand{\R}{\mathbb R}
\newcommand{\Rpp}{\R_{>0}}

\newcommand{\supp}{\operatorname{supp}}

\newcolumntype{Y}{>{\raggedright\arraybackslash}X}

\AtBeginDocument{%
}

\title{Best-of-Both-Worlds Fair Division via Subsidies}
\author{
\begin{tabular}{cc}
	\textbf{Surya Panchapakesan} & \textbf{Rohit Vaish} \\[2pt]
    {\small National University of Singapore} & {\small IIT Delhi} \\[2pt] \href{mailto:csurya.panchapakesan@u.nus.edu}{\small\texttt{csurya.panchapakesan@u.nus.edu}}
    &
    \href{mailto:rvaish@iitd.ac.in}{\small\texttt{rvaish@iitd.ac.in}}
\end{tabular}
}

\date{}

\begin{document}
\maketitle 

\begin{abstract}
We study the relationship between fair division with \emph{subsidies} and \emph{best-of-both-worlds} (BoBW) fairness for indivisible items. We prove a general reduction of BoBW existence to envy-freeability. Specifically, for any scale-free ex-post property $P$ (e.g., EF1, EFX, PO) and any scale-free valuation class $\mathcal{V}$ (e.g, additive, submodular, subadditive, monotone), if every instance in $\mathcal{V}$ admits a $P$-allocation that can be made envy-free using subsidies, then every such instance admits an ex-ante envy-free lottery supported on $P$-allocations. The proof uses linear programming duality. As applications, the reduction gives alternative proofs of BoBW existence results for additive goods, chores, and mixtures of goods and chores. The converse of our general reduction holds for two agents but fails for three: We provide a four-good counterexample with monotone submodular valuations where an ex-ante envy-free and ex-post EF1 lottery exists, but no allocation is simultaneously EF1 and envy-freeable. For lexicographic valuations, we prove simultaneous EFX, Pareto optimality, and envy-freeability (and hence BoBW guarantees for EFX+PO) for three agents, and show that even EFX and envy-freeability can be incompatible for four agents.
\end{abstract}

\section{Introduction}
\label{sec:Introduction}

Envy-freeness is a central solution concept in resource allocation. It requires each agent to weakly prefer its own bundle to every other agent's bundle~\citep{F67resource}. Often the ``gold standard'' of fairness, envy-freeness enjoys a strong existential and computational support when the resources are infinitely divisible~\citep{V74equity}. However, indivisibility makes this requirement impossible even when two agents share a single valuable good. One can therefore consider relaxations, such as envy-freeness up to one item (EF1) or up to any item (EFX).

Such relaxations guarantee fairness despite indivisibility, but they may still leave some agents dissatisfied. Indeed, when one good is to be divided between two agents, the agent who does not receive the good might consider the allocation unfair, even though such an allocation satisfies approximate fairness in the sense of EF1 or EFX. To remedy this, the literature has explored two prominent approaches: fairness through \emph{subsidies} and fairness through \emph{randomization}.

In fair division with subsidies, envious agents can be compensated through monetary payments. An allocation is called \emph{envy-freeable} if nonnegative payments make its bundle–payment pairs envy-free~\citep{HS19fair}. On the other hand, fairness through randomization entails constructing a probability distribution, or a lottery, over deterministic allocations. Such a lottery must be envy-free \emph{ex ante}~(i.e., in expected utility), while every realized allocation satisfies an approximate fairness condition \emph{ex post}, such as envy-freeness up to one good (EF1). Combining ex-ante and ex-post fairness properties is also known as best-of-both-worlds (BoBW) fairness~\citep{AFS+24best}. In particular, the lottery does not use any payments.

The subsidy and BoBW literatures have developed largely independently of each other, yet several of their existence guarantees are strikingly parallel. For additive goods, for example, balanced EF1 allocations can be made envy-free with subsidies~\citep{BDN+20one}, and balanced EF1 outcomes can be randomized to eliminate
envy in expectation~\citep{AFS+24best}. It is interesting to ask whether these two seemingly different results are fundamentally connected.

Our main result reduces the existence of a BoBW lottery to the existence of suitable subsidy-supported allocations. Let $P$ be any property that is invariant under multiplying each agent's valuation by a positive constant; examples include EF1, EFX, balancedness, Pareto optimality, and conjunctions of these properties. We show that if every positive rescaling of a profile admits a $P$-allocation that is envy-freeable, then the original profile admits an ex-ante envy-free lottery all of whose outcomes satisfy $P$~(\Cref{thm:bobw-to-subsidy}).

The reduction has both positive and negative consequences (\Cref{sec:consequences}). It gives alternative existence proofs for additive goods, chores, and mixed manna, and it extends to weighted entitlements. For two agents, it has an exact converse: For every positive rescaling, any ex-ante envy-free lottery has an envy-freeable allocation in its support at that scale. The contrapositive of our reduction shows that failure of a BoBW guarantee implies the incompatibility of $P$ and envy-freeability at some positive rescaling~(\Cref{cor:impossibility}).

\begin{table}[!t]
\centering
\small
\setstretch{1}
\setlength{\tabcolsep}{5pt}
\renewcommand{\arraystretch}{1.08}
\begin{tabularx}{0.8\linewidth}{@{}
  >{\raggedright\arraybackslash}p{0.225\linewidth}
  >{\raggedright\arraybackslash}X
  >{\raggedright\arraybackslash}p{0.16\linewidth}@{}}
\toprule
\textbf{Setting} & \textbf{Main consequence} & \textbf{Reference} \\
\midrule
\multicolumn{3}{@{}l}{\textit{New guarantees and separations}} \\
\addlinespace[4pt]
Lexicographic goods\newline $n=3$
& $\EFX+\PO+\EFable$ always exists;\newline hence $\BoBW(\EFX+\PO)$.
& Theorem~\ref{thm:lex-three} \\
\addlinespace[4pt]
Additive goods\newline $n=2$
& $\EFX_{+}+\PO+\EFable$ always exists.
& Corollary~\ref{cor:two-agent corollary} \\
\addlinespace[4pt]
Monotone submodular\newline goods; $n=3$
& $\EFone+\EFable$ can fail even when\newline $\BoBW(\EFone)$ exists.
& Theorem~\ref{thm:counterexample-submodular-valuations} \\
\addlinespace[4pt]
Lexicographic goods\newline $n=4$
& $\EFX+\EFable$ can fail even when\newline $\BoBW(\EFX+\PO)$ exists.
& Theorem~\ref{thm:counterexample-lexicographic-four} \\
\addlinespace[4pt]
Additive goods
& $\EFone+\fPO+\EFable$ can fail.
& Corollary~\ref{impossible:EF1+fPO+Efable} \\
\addlinespace[4pt]
Additive chores\newline Unequal entitlements
& $\mathrm{WEF1}+\mathrm{WEFable}$ can fail.
& Corollary~\ref{impossible: WEF1 and EFable chores} \\
\addlinespace[5pt]
\midrule
\multicolumn{3}{@{}l}{\textit{Alternative proofs of known results}} \\
\addlinespace[4pt]
Additive goods\newline or chores
& $\BoBW(\text{balanced EF1})$ always exists.
& Corollary~\ref{cor:additive-bobw} \\
\addlinespace[4pt]
Additive mixed manna
& $\BoBW(\EFone)$ always exists.
& Corollary~\ref{cor:additive-bobw} \\
\addlinespace[4pt]
Additive goods\newline Unequal entitlements
& $\mathrm{WEF1}+\mathrm{WEFable}$ can fail.
& Corollary~\ref{impossible: WEF1 and EFable goods} \\
\bottomrule
\end{tabularx}
\caption{Selected consequences of the subsidy--BoBW connection. $\BoBW(P)$ denotes an ex-ante EF lottery satisfying $P$ ex post.
EFable and WEFable denote ordinary and weighted envy-freeability.}
\label{tab:bobw-subsidy-domains}
\end{table}

Our reduction connects several existing results in the subsidy and BoBW literatures and contributes new observations for both problems. Some of the key consequences of our reduction are as follows~(also see \Cref{tab:bobw-subsidy-domains}):

\begin{itemize}[topsep=0.2cm]
    \item For additive goods, we recover the existence result of~\citet{AFS+24best} on ex-ante EF and ex-post EF1 lotteries using the existence of an EF1 and envy-freeable allocation~\citep{BDN+20one}. We also recover their BoBW existence for additive chores, as EF1 and envy-freeability are known to be compatible in that setting~\citep{WZZ23}.
    \item For mixtures of additive goods and chores, we recover the BoBW existence of~\citet{ABLMSTW26mixed} using the result of~\citet{ALMS25mixed} on the existence of EF1 and envy-freeable allocations.
    \item The compatibility of EF1 and envy-freeability for monotone goods was left open by~\citet{ALMS25mixed}. We answer this question in the negative with a three-agent, four-good example with submodular valuations~(\Cref{thm:counterexample-submodular-valuations}). The instance nevertheless admits a BoBW lottery, showing that the converse of our reduction does not hold for three agents.  
    
    \item For lexicographic goods,~\citet{KPVVY25fairer} obtained ex-ante $\frac{9}{10}$-EF alongside ex-post EFX and Pareto optimality (PO). We prove that for three agents, an ex-ante EF and ex-post EFX+PO lottery exists by showing that an EFX, PO, and envy-freeable allocation always exists~(\Cref{thm:lex-three}). 
    
    \item Our reduction extends to the setting where agents have unequal entitlements (or weights). As a result, we obtain that for additive chores, weighted EF1 and weighted envy-freeability are incompatible by leveraging a BoBW impossibility result of~\citet{LWZ26binary}.    
\end{itemize}

\paragraph{Technical overview of our reduction.}
Our proof uses linear programming (LP) duality. The primal LP maximizes the smallest expected negative envy over lotteries supported on $P$-allocations. If no ex-ante envy-free lottery exists, the dual LP gives nonnegative weights on pairwise envy comparisons such that every $P$-allocation has strictly positive weighted total envy. Thus, the same weights rule out every admissible allocation. Adding a small perturbation makes the weights strictly positive. We then use a standard stationary-distribution argument to factor these weights into positive agent-specific scales and a \emph{circulation}---a flow with equal incoming and outgoing weights at every agent. At the resulting scaled profile, summing the subsidy inequalities for an envy-freeable $P$-allocation against the circulation cancels all payment terms. This contradicts the dual separation and establishes the desired lottery.

\paragraph{The power and limits of our reduction.}
Our reduction applies to arbitrary real-valued valuations and requires neither additivity nor monotonicity. The scale-freeness requirements accommodate a wide range of fairness and efficiency properties, including EF1, EFX, PO, and their conjunctions, as well as the additive, submodular, subadditive, and monotone valuation domains. The argument is also quite simple: it uses only standard LP duality and a stationary-distribution construction. %
Additionally, it guarantees a BoBW lottery supported only on $\mathcal{O}(n^2)$ allocations, independent of the number of items.

The reduction is, however, existential, and does not, by itself, provide an efficient algorithm for constructing the lottery. Moreover, the subsidy guarantee must hold after \emph{every} positive agent-specific rescaling. In particular, it is not enough to have an envy-freeable $P$-allocation at the original profile alone.

\subsection{Related Work}

\paragraph{Fair division with subsidy.}
\citet{HS19fair} studied envy-freeable allocations of indivisible goods and established bounds on the subsidies required to achieve envy-freeness. For additive valuations, they characterized envy-freeability by the absence of positive-weight directed envy cycles and by welfare optimality under bundle permutations.~\citet{A21achieving} noted that this characterization extends to arbitrary real-valued valuations; we recall these results in~\Cref{thm:efable-char}.~\citet{BDN+20one} provided an algorithm that returns a balanced, EF1 and envy-freeable allocation for additive goods, with at most one unit of subsidy per agent when every item value is at most $1$.~\citet{ALMS25mixed} extended simultaneous EF1 and envy-freeability to additive mixed goods and chores. 

Several works have studied subsidy bounds beyond additivity~\citep{KMS+25towards,lF25discrepancy,DS26subquadratic}. \citet{DS26subquadratic} proved that a total subsidy of $\mathcal{O}(n^{3/2}\sqrt{\log n})$ suffices when all valuations are nonnegative on every bundle, or all are nonpositive, provided single-item marginal values lie in $[-1,1]$. Notably, their result does not require monotonicity. However, these subsidy bounds do not automatically guarantee that the underlying allocation is EF1. Our submodular counterexample (\Cref{thm:counterexample-submodular-valuations}) shows that, beyond additive valuations, EF1 and envy-freeability can be incompatible.%

\citet{AHKS+25Subsidy} and \citet{EGE25Subsidy} studied the subsidy problem for agents with unequal entitlements. These works characterized weighted envy-freeability and provided lower and upper bounds on the subsidy required for various valuation classes. \citet{AHKS+25Subsidy} also showed that weighted envy-freeability is incompatible with a weakening of weighted EF1.

\paragraph{Best-of-both-worlds fairness.}
\citet{AFS+24best} established the existence of ex-ante envy-free and ex-post EF1 lotteries for additive goods. \citet{ABLMSTW26mixed} extended this guarantee to additive mixed goods and chores. Our reduction gives alternative existence proofs of these results; however, we do not provide an efficient algorithm. 

Beyond additivity, \citet{HSV24entitlements} obtained ex-ante EF and ex-post EF1 for multi-demand valuations, while \citet{FMNP24breaking} obtained ex-ante $1/2$-EF with ex-post EF1 and $1/2$-EFX for subadditive goods. For matroid-rank valuations, \citet{BEF21dichotomous} obtained ex-ante envy-freeness and ex-post Lorenz dominance through random priorities. 
\citet{KPVVY25fairer} obtained ex-ante $1/2$-EF with ex-post EFX-with-charity for monotone valuations, as well as a bounded-charity guarantee with ex-ante half-proportionality for subadditive valuations. These partial-allocation guarantees are distinct from the complete allocations considered here. Recent work by~\citet{F26subadditivebobw} studies BoBW guarantees for the maximum expectation share (MES) ex ante and the maximin share (MMS) ex post for subadditive valuations, rather than pairwise envy-freeness.

For stronger ex-post fairness,~\citet{HSV+21fair} studied EFX and PO under lexicographic preferences, and~\citet{KPVVY25fairer} obtained ex-ante $\frac{9}{10}$-EF alongside ex-post EFX+PO on that domain. For additive goods,~\citet{Aziz26po} proved that exact ex-ante EF is achievable alongside ex-post EFX$_{+}$+PO for two agents but fails for three agents. However, under lexicographic preferences, the above guarantee holds for three agents (\Cref{thm:lex-three}).

For the setting of unequal entitlements,~\citet{AGM23BoBW} showed that ex-ante weighted EF (WEF) and ex-post weighted EF1 (WEF1) are incompatible under additive valuations.
They also showed the existence of a lottery that is ex-ante WEF and ex-post weighted EF1 up to a transfer (WEF1T). Subsequently, ~\citet{LLTZ26weightedmixed} extended this positive result to additive mixed manna.

\section{Preliminaries}
\label{sec:Preliminaries}

For any nonnegative integer $r$, let $[r] \coloneqq \{1,2,\dots,r\}$, with $[0] = \emptyset$. 

Let $N = [n]$, where $n \geq 2$, be a set of agents and let $M$ be a finite set of indivisible items. A complete allocation $A = (A_1, \dots, A_n)$ is an ordered partition of $M$ with possibly empty bundles. Agent $i$ has a valuation function $v_i: 2^M \to \R$ that specifies its numerical value, or \emph{utility}, $v_i(S)$ for each subset, or \emph{bundle}, $S \subseteq M$. We assume $v_i(\emptyset) = 0$ for all agents $i \in N$.

\paragraph{Valuation domains.}
We abbreviate singleton values $v_i(\{g\})$ as $v_i(g)$. A valuation is \emph{additive} if for every subset $S \subseteq M$, $v_i(S) = \sum_{g\in S}v_i(g)$. Additive goods have nonnegative item values, additive chores have nonpositive item values, and mixed manna allows an item to have different signs for different agents. A valuation is \emph{monotone} if $v_i(S) \leq v_i(T)$ whenever $S \subseteq T$, \emph{submodular} if $v_i(S) + v_i(T) \geq v_i(S \cup T) + v_i(S \cap T)$, and \emph{subadditive} if $v_i(S \cup T) \leq v_i(S) + v_i(T)$.

A valuation is \emph{lexicographic} if it is additive and admits an order $g_1 \succ_i \cdots \succ_i g_m$ such that $v_i(g_r) > \sum_{s > r} v_i(g_s)$ for all $r \in [m]$. Equivalently, the highest-ranked good in the symmetric difference of two distinct bundles determines their comparison. 

A \emph{matroid-rank} valuation is a normalized monotone submodular valuation whose marginal increments are in $\{0,1\}$.

\paragraph{Fairness notions.}
An allocation $A$ is \emph{envy-free} (EF) if $v_i(A_i) \geq v_i(A_j)$ for all $i,j \in N$. For monotone goods, an allocation $A$ is \emph{envy-free up to one good} (EF1) if, for every $i,j \in N$, either $i$ does not envy $j$ or some $g \in A_j$ satisfies $v_i(A_i) \geq v_i(A_j \setminus \{g\})$. It is \emph{envy-free up to any good} (EFX) if this inequality holds for every $g \in A_j$ and every $i,j \in N$. For additive goods, EFX$_{+}$ is a weakening of EFX where only positively-valued items are removed from the envied bundle. For additive mixed manna, EF1 allows removing one item from the union of the two bundles; that is, either $i$ does not envy $j$ or some $g \in A_i \cup A_j$ satisfies $v_i(A_i \setminus \{g\}) \geq v_i(A_j \setminus \{g\})$.

\paragraph{Efficiency notions.}
An allocation is \emph{Pareto optimal} (PO) if no other complete allocation weakly improves every agent's utility and strictly improves at least one. For additive valuations, the notion of \emph{fractional Pareto optimality} (fPO) also excludes improvements by fractional assignments $x_{ig}\ge0$, $\sum_i x_{ig}=1$, with utility $\sum_g v_i(g)x_{ig}$. An allocation is \emph{balanced} if bundle sizes differ by at most one. We use fractional Pareto optimality only in the context of additive valuations.

\paragraph{Allocation properties and lotteries.}
Write $v=(v_1,\dots,v_n)$ for a valuation profile and let $\A$ denote the finite set of all complete allocations. For any property $P$, let $\F_P(v) \subseteq \A$ denote the set of allocations satisfying $P$ at profile $v$. A \emph{lottery} is a vector $q = (q_A)_{A \in \A}$ with $q_A \geq 0$ and $\sum_{A \in \A} q_A = 1$, and its support is $\supp(q) \coloneqq \{A \in \A : q_A > 0\}$. The lottery satisfies $P$ \emph{ex post} if $\supp(q) \subseteq \F_P(v)$. For $\alpha \in (0,1]$. it is \emph{ex-ante $\alpha$-envy-free} if, for all $i,j\in N$,
$$
  \sum_{A \in \A} q_A v_i(A_i) \geq \alpha \cdot \sum_{A \in \A} q_A v_i(A_j).
$$
When $\alpha = 1$, we call the lottery ex-ante envy-free. We use the phrase \emph{BoBW lottery with ex-post property $P$} to refer to a lottery that is ex-ante envy-free and satisfies $P$ ex post. 
For nonadditive valuations, the expectations are taken with respect to bundle values (and not those of individual items).

\paragraph{Envy-freeability.} An allocation $A$ is \emph{envy-freeable} if there is a nonnegative subsidy vector $p \in \R_{\geq 0}^n$ such that
$$
  v_i(A_i) + p_i \geq v_i(A_j) + p_j \qquad \text{for all } i,j \in N.
$$
In other words, the bundle-subsidy pairs are envy-free under quasilinear utilities $v_i(S) + p$. 

The \emph{envy graph} of allocation $A$ is the complete directed graph on $N$ with edge weights $w^v_A(i,j) \coloneqq v_i(A_j) - v_i(A_i)$.

\begin{theorem}[Characterisation of envy-freeability;~\citep{HS19fair,A21achieving}]
For every allocation $A$ and arbitrary real-valued valuations, the following are equivalent.
\begin{enumerate}[label=\textup{(\roman*)},leftmargin=*]
  \item $A$ is envy-freeable.
  \item The envy graph of $A$ has no positive-weight directed cycle.
  \item $A$ maximizes utilitarian welfare among all permutations of its bundles, i.e., for every permutation $\sigma$ of $N$, it holds that $\sum_{i \in N} v_i(A_i) \geq \sum_{i \in N} v_i(A_{\sigma(i)})$.
\end{enumerate}
When these conditions hold, the subsidies may be chosen as the maximum weights of directed paths starting at the corresponding vertices, including the zero-length path of weight $0$.
\label{thm:efable-char}
\end{theorem}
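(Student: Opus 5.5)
We prove the cycle of implications (i)$\Rightarrow$(ii)$\Rightarrow$(iii)$\Rightarrow$(ii)$\Rightarrow$(i), working with the envy graph weights $w^v_A(i,j)=v_i(A_j)-v_i(A_i)$. None of the steps uses additivity or monotonicity, since only the finitely many numbers $v_i(A_j)$ appear.

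\emph{(i)$\Rightarrow$(ii).} Take subsidies $p$ witnessing envy-freeability, so $w^v_A(i,j)\le p_i-p_j$ for every edge. Summing over the edges of a directed cycle telescopes the right-hand side to $0$. Hence every cycle has nonpositive weight.

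\emph{(ii)$\Leftrightarrow$(iii).} A permutation $\sigma$ of $N$ decomposes into disjoint cycles. The welfare change satisfies
\[
\sum_i v_i(A_{\sigma(i)})-\sum_i v_i(A_i)=\sum_i w^v_A(i,\sigma(i)),
\]
which is the total weight of the cycles of $\sigma$ (fixed points contribute $0$). So if no directed cycle has positive weight, no permutation improves welfare. Conversely, a simple positive cycle $i_1\to\dots\to i_k\to i_1$ defines a permutation whose welfare gain is exactly that cycle's weight, so (iii) implies (ii). Non-simple cycles split into simple ones, so it suffices to consider simple cycles throughout.

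\emph{(ii)$\Rightarrow$(i), with the stated subsidies.} Define $p_i$ as the maximum weight of a directed path starting at $i$, including the empty path, so $p_i\ge 0$. This maximum is well defined: with no positive cycles, any walk can be shortcut to a simple path of at least the same weight, and there are finitely many simple paths. For any edge $(i,j)$, prepend it to an optimal path $Q$ from $j$. If $Q$ avoids $i$, the result is a path from $i$ of weight $w^v_A(i,j)+p_j$, so $p_i\ge w^v_A(i,j)+p_j$, which is exactly $v_i(A_i)+p_i\ge v_i(A_j)+p_j$.

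If $Q$ passes through $i$, then the edge $(i,j)$ followed by the prefix of $Q$ up to $i$ forms a cycle, which has nonpositive weight. Deleting this cycle leaves the suffix of $Q$ from $i$, whose weight is at least $w^v_A(i,j)+p_j$, and this suffix is a path from $i$. So the inequality holds in this case too.

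\emph{Main obstacle.} The only delicate step is this last one: making sure the path-based potential is well defined and that the path-extension argument survives revisits. Both are handled by the cycle-shortcutting observation, which relies on (ii).
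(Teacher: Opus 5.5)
Your proof is correct, and it is the standard argument of \citet{HS19fair}, which the paper cites rather than reproves: telescoping the subsidy inequalities around a cycle, writing the welfare change of a bundle permutation as the total weight of its cycles in the envy graph, and taking maximum path weights as subsidies, exactly as the statement's last sentence indicates. Your separate treatment of the case where the optimal path from $j$ passes back through $i$ is the one step often glossed over, and you close it correctly by using (ii) on the cycle that is cut off.
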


Envy-freeability is a form of welfare stability under reassignment of the existing bundles. However, it does not exclude welfare-improving reallocations that split the bundles. 
In \Cref{app: PO and EFable}, we show that Pareto optimality (PO) and envy-freeability are incomparable.

\paragraph{Scale-freeness.}
For a positive vector $c = (c_1, \dots, c_n) \in \Rpp^n$, the scaled profile $c \odot v$ is defined by $(c \odot v)_i(S) \coloneqq c_i v_i(S)$ for every $i \in N$ and $S \subseteq M$. An ex-post property $P$ is \emph{scale-free} if $\F_P(c \odot v) = \F_P(v)$ for every $v$ and $c\in\Rpp^n$. A valuation domain $\mathcal{V}$ is \emph{scale-free} if $c\odot v\in\mathcal{V}$ whenever $v\in\mathcal{V}$ and $c\in\Rpp^n$. 
Note that domain scale-freeness preserves admissible profiles, whereas property scale-freeness preserves admissible allocations.

The additive, submodular, subadditive, monotone, %
and lexicographic domains are scale-free. %
Matroid-rank valuations are not scale-free since scaling a unit marginal to $1/2$ requires leaving the domain. 

Many standard fairness and efficiency notions, such as exact envy-freeness and its approximations (EF, EF1, EFX), Pareto optimality (PO), fractional Pareto optimality (fPO; on additive profiles), and balancedness, are scale-free, as are proportionality and maximin-share.  By contrast, envy-freeability is not scale-free: Indeed, by condition (ii) of \Cref{thm:efable-char}, the net weight of an envy cycle may change sign upon scaling the valuation of one of the agents in the cycle. Likewise, equitability and its usual relaxations, such as EQ1 and EQX, also need not be scale-free because they compare utilities across agents.

\paragraph{Weighted notions.}
For fixed positive entitlements $w_1,\dots,w_n$, weighted envy-freeness compares $v_i(A_i)/w_i$ with $v_i(A_j)/w_j$. An allocation is \emph{weighted envy-freeable} if some $p \in  \R_{\geq 0}^n$ satisfies
$$
 \frac{v_i(A_i) + p_i}{w_i} \geq \frac{v_i(A_j) + p_j}{w_j} \qquad \text{for all } i,j \in N.
$$
For equal entitlements, weighted envy-freeability coincides with ordinary envy-freeability. Ex-ante weighted envy-freeness takes expectations of the same weighted bundle comparisons, without subsidies. For additive mixed manna, WEF1 eliminates each weighted envy comparison by deleting a positively valued item from the envied bundle or a negatively valued item from the envious bundle, with signs evaluated by the comparing agent. WEF1T permits transferring such an item between the two agents instead~\citep{LLTZ26weightedmixed}. Our main reduction in \Cref{sec:Reduction} concerns equal entitlements, and \Cref{thm:weighted} states the weighted extension.

\section{Reducing Randomization to Subsidies}\label{sec:Reduction}

In this section, we will establish our main reduction. Specifically, we will show that for any scale-free ex-post property $P$, if every profile in a scale-free domain admits an envy-freeable allocation satisfying $P$, then every profile in that domain admits a BoBW lottery with ex-post property $P$.

\begin{theorem}[BoBW-to-subsidy reduction]
Let $P$ be a scale-free ex-post property and fix a profile $v$. Suppose that, for every coordinatewise positive vector $c = (c_1, \dots, c_n) \in \Rpp^n$, the scaled profile $c \odot v$ admits an allocation that is $P$ and envy-freeable. Then, $v$ admits an ex-ante envy-free and ex-post $P$ lottery.
\label{thm:bobw-to-subsidy}
\end{theorem}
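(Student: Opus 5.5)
We argue by LP duality, following the outline in the technical overview. Let $\F = \F_P(v)$; it is nonempty, since the hypothesis at $c=\one$ gives a $P$-allocation. Consider the LP over lotteries $q \in \R^{\F}_{\ge 0}$ with $\sum_A q_A = 1$ that maximizes $t$ subject to
$$\sum_{A\in\F} q_A\,(v_i(A_i)-v_i(A_j)) \ge t \quad \text{for all ordered pairs } i\neq j.$$
A BoBW lottery exists iff the optimum is $t^*\ge 0$. Suppose for contradiction that $t^*<0$. By LP duality (equivalently, a minimax or separating-hyperplane argument for the finite zero-sum game in which the designer picks $A\in\F$ and an adversary picks a pair $(i,j)$), there are weights $\lambda_{ij}\ge 0$ with $\sum_{i\ne j}\lambda_{ij}=1$ such that every $A\in\F$ satisfies
$$\sum_{i\ne j}\lambda_{ij}\,(v_i(A_j)-v_i(A_i)) \ge -t^* > 0.$$
Because $\F$ is finite, the left side is continuous in $\lambda$. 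We can therefore perturb to $\lambda'_{ij}=\lambda_{ij}+\eps$ with $\eps>0$ small, so that all weights are strictly positive and the strict inequality still holds for every $A\in\F$.

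Next we factor $\lambda'$ as $\lambda'_{ij} = c_i\,f_{ij}$, where $c\in\Rpp^n$ and $f$ is a circulation, meaning $\sum_{j} f_{ij}=\sum_j f_{ji}$ for every $i$. Let $d_i=\sum_{j\ne i}\lambda'_{ij}>0$ and define the Markov chain $Q_{ij}=\lambda'_{ij}/d_i$. It is irreducible because every off-diagonal entry is positive, so it has a stationary distribution $\pi$ with all $\pi_i>0$. Setting $f_{ij}=\pi_i Q_{ij}$ gives a circulation, since stationarity is exactly the balance condition $\sum_i \pi_i Q_{ij}=\pi_j=\sum_k \pi_j Q_{jk}$. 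Then $\lambda'_{ij}=(d_i/\pi_i)\,f_{ij}$, so we take $c_i = d_i/\pi_i$. Dividing by these scales, the dual inequality becomes, for every $A\in\F$,
$$\sum_{i\ne j} f_{ij}\,\bigl(c_i v_i(A_j)-c_i v_i(A_i)\bigr) > 0.$$

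Now apply the hypothesis at the scale vector $c$. By scale-freeness, $\F_P(c\odot v)=\F$, so there is some $A\in\F$ that is envy-freeable at $c\odot v$ with subsidies $p$: $c_iv_i(A_i)+p_i\ge c_iv_i(A_j)+p_j$ for all $i,j$. Multiply each of these inequalities by $f_{ij}\ge 0$ and sum over pairs. The payment terms contribute $\sum_{i,j} f_{ij}(p_j-p_i)=\sum_j p_j\bigl(\sum_i f_{ij}-\sum_i f_{ji}\bigr)=0$ by the circulation property. What remains is $\sum f_{ij}\bigl(c_iv_i(A_j)-c_iv_i(A_i)\bigr)\le 0$, which contradicts the strict inequality above. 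Hence $t^*\ge 0$, and the optimal lottery is ex-ante envy-free and ex-post $P$. By taking a basic optimal solution, whose support size is at most the number of constraints, it is supported on $O(n^2)$ allocations.

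The main obstacle is the factorization step. A raw dual vector $\lambda$ may have zeros, which can make the chain reducible and some $\pi_i$ zero, so that no positive scale exists. This is why we perturb to strictly positive weights before factoring, using finiteness of $\F$ to keep the separation strict.
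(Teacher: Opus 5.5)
Your proposal is correct and follows the paper's proof essentially step for step. You use LP duality to obtain a separating weight vector, perturb it to be strictly positive, factor it via a stationary distribution as $\lambda'_{ij}=c_if_{ij}$ with $f$ a circulation, and cancel the subsidy terms by summing the envy-freeability inequalities at $c\odot v$ against $f$. The only cosmetic differences are that the paper perturbs by a convex combination with the uniform vector (keeping the normalization), and that it obtains the $O(n^2)$ support bound in a separate corollary via Carath\'eodory rather than from a basic optimal solution.
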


 The proof of \Cref{thm:bobw-to-subsidy} uses linear programming (LP) duality. The primal LP chooses a lottery that maximizes the smallest expected negative envy. Its dual assigns nonnegative weights to the directed pairwise comparisons. A negative value of the dual optimum certifies that every $P$-allocation violates the \emph{same} weighted comparison~(\Cref{lem:weighted-dual-separation}). We then show that a strictly positive vector of these weights can be decomposed into a product of agent-specific scaling factors and a \emph{circulation}~(\Cref{lem:decomposition}). The subsidy terms cancel out for such a circulation, giving the desired contradiction. In \Cref{cor:small-support}, we show that the lottery guaranteed by \Cref{thm:bobw-to-subsidy} can be chosen to have a support of size at most $n(n-1)+1$. \\ 

Let $E \coloneqq \{(i,j) \in N \times N : i \neq j\}$ and $K \coloneqq |E| = n(n-1)$. For each allocation $A$, define its \emph{negative envy vector} as $d(A) \coloneqq (d_{ij}(A))_{(i,j)\in E}$, where $d_{ij}(A) \coloneqq v_i(A_i)-v_i(A_j)$. Thus, $d_{ij}(A)=-w_A^v(i,j)$ is the negative of the weight of the corresponding edge in the envy graph of $A$. Observe that a lottery is ex-ante envy-free if and only if its expected negative envy vector is coordinatewise nonnegative. For any $\lambda \in \R^E$, we will write $\lambda \cdot d(A) = \sum_{e \in E} \lambda_e d_e(A)$.

\begin{lemma}[Weighted Dual Separation]\label{lem:weighted-dual-separation}
Suppose the profile $v$ admits an allocation satisfying $P$, i.e, $\F_P(v) \neq \emptyset$, but does not admit any lottery that is ex-ante envy-free and ex-post $P$. Then, there exists $\lambda \in \R_{>0}^E$ with $\sum_{e \in E} \lambda_e = 1$ such that $\lambda \cdot d(A) < 0$ for every $A \in \F_P(v)$.
\end{lemma}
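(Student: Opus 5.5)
We apply a theorem of the alternative (LP duality, or equivalently separating hyperplanes) to the finite point set $D \coloneqq \{d(A) : A \in \F_P(v)\} \subseteq \R^E$. A lottery supported on $\F_P(v)$ has expected negative envy vector $\sum_{A} q_A d(A)$, which ranges exactly over the convex hull $\mathrm{conv}(D)$. So the hypothesis says that $\mathrm{conv}(D)$, a nonempty compact polytope since $\F_P(v)\neq\emptyset$ and $\A$ is finite, is disjoint from the closed convex cone $\R^E_{\ge 0}$.

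First, I would set up the primal LP: maximize $t$ subject to $\sum_{A\in\F_P(v)} q_A d_e(A) \ge t$ for all $e\in E$, $\sum_A q_A = 1$, and $q\ge 0$. It is feasible and bounded, since $t \le \max_{A,e} d_e(A)$. By hypothesis its optimum satisfies $t^*<0$. The dual reads: minimize $s$ subject to $\sum_e \mu_e d_e(A) \le s$ for all $A\in\F_P(v)$, $\sum_e \mu_e = 1$, and $\mu\ge 0$. Strong duality gives a $\mu\ge 0$ with $\sum_e\mu_e=1$ and $\mu\cdot d(A)\le t^*<0$ for every $A\in\F_P(v)$. Alternatively, strictly separating the compact $\mathrm{conv}(D)$ from the closed cone $\R^E_{\ge0}$ gives a vector $\mu$ that is nonnegative (because it must be bounded on the cone) and strictly negative on $D$; normalizing its sum to $1$ gives the same conclusion.

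Second, I would perturb to obtain strict positivity, which is the only step needing a little care. Set $\delta \coloneqq -t^*>0$ and $B \coloneqq \max_{A\in\F_P(v),\,e\in E} |d_e(A)|$, which is finite because there are finitely many allocations. Define $\lambda \coloneqq (1-\eps)\mu + \eps\,\one/K$ for a small $\eps\in(0,1)$. Then $\lambda_e \ge \eps/K>0$ for every $e$ and $\sum_e\lambda_e=1$. For each $A\in\F_P(v)$,
\[
\lambda\cdot d(A) \le -(1-\eps)\delta + \eps B .
\]
This is strictly negative once $\eps < \delta/(\delta+B)$, which gives the claimed $\lambda$.

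I expect the main subtlety to be making sure the dual produces a uniform strict negativity margin. The margin $\delta$ is what lets the perturbation preserve the sign across all (finitely many) $A$ at once. Finiteness of $\F_P(v)$, together with the nonemptiness assumption that makes the primal feasible, is exactly what makes this work.
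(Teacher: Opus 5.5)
Your proof is correct and follows the paper's own argument: the same primal/dual LP pair, strong duality giving a normalized nonnegative weight vector with uniform margin $\delta=-t^\star>0$, and the same perturbation $\lambda=(1-\eps)\mu+\eps\,\one/K$. The only cosmetic difference is that your $B=\max_{A,e}|d_e(A)|$ upper-bounds the paper's $\max_A|(1/K)\,\one\cdot d(A)|$, so any $\eps<\delta/(\delta+B)$ works, including the paper's choice $\eps=\delta/(2(\delta+B))$.
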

\begin{proof}
The proof follows from LP duality. Consider any ex-post $P$ lottery for the profile $v$. Suppose the lottery associates a probability $q_A$ with an allocation $A \in \F_P(v)$ satisfying property $P$. Consider the following primal LP, whose variables are the lottery probabilities $q = (q_A)_{A \in \F_P(v)}$ and a
scalar $t \in \R$:
$$
\begin{aligned}
  \textup{maximize} \quad & t\\
  \textup{subject to} \quad
    & \sum_{A \in \F_P(v)} q_A d_e(A) \geq t && \text{ for all } e \in E,\\
    & \sum_{A \in \F_P(v)} q_A = 1,\\
    & q_A \geq 0 && \text{ for all } A \in \F_P(v).
\end{aligned}
$$
Its optimum $t^\star$ is the largest common lower bound on all expected negative envy values. In particular, a BoBW lottery exists if and only if $t^\star \geq 0$. The LP is feasible because $\F_P(v)$ is nonempty. Its optimum is finite and is attained, since, after eliminating $t$, the objective is the continuous function $q \mapsto \min_{e \in E} \sum_{A \in \F_P(v)} q_A d_e(A) $ on the compact probability simplex.

Let us associate a nonnegative dual variable $\lambda_e$ with each comparison constraint and a dual variable $z \in \R$ with the normalization constraint. The dual LP is given by
$$
\begin{aligned}
  \textup{minimize} \quad & z\\
  \textup{subject to} \quad
    & \lambda \cdot d(A) \leq z && \text{ for all } A \in \F_P(v),\\
    &\sum_{e \in E} \lambda_e = 1,\\
    &\lambda_e \geq 0 && \text{ for all } e \in E.
\end{aligned}
$$
The dual can be observed to be feasible by instantiating it for $\lambda_e = 1/K$ for all $e \in E$ and $z = \max_{A \in \F_P(v)} \sum_{e \in E} d_e(A)/K$. Strong LP duality therefore gives an optimal dual solution $(\lambda^0,z^\star)$ with $z^\star=t^\star$.

If no BoBW lottery exists, then $z^\star = t^\star < 0$. Let $\delta \coloneqq -z^\star>0$. Then, every allocation $A \in \F_P(v)$ satisfies $\lambda^0 \cdot d(A) \leq - \delta$. 

Since $\lambda^0$ satisfies the dual, it is coordinatewise nonnegative. We will now perturb $\lambda^0$ to construct the desired coordinatewise strictly positive weight vector. Let $u_e \coloneqq 1/K$ for all $e \in E$, and set
$$
 B \coloneqq \max_{A \in \F_P(v)} |u \cdot d(A)|,
 \qquad \eps = \frac{\delta}{2(\delta + B)}.
$$
Consider the vector $\lambda \coloneqq (1 - \eps) \lambda^0 + \eps u$. Observe that all coordinates of $\lambda$ are strictly positive and sum to $1$. Consequently, for every allocation $A \in \F_P(v)$, we have
$$
 \lambda \cdot d(A) \leq -(1 - \eps) \delta + \eps B = -\frac{\delta}{2} < 0,
$$
as desired.
\end{proof}

The next decomposition lemma~(\Cref{lem:decomposition}) states that any coordinatewise positive vector $\lambda \in \R_{>0}^E$ can be expressed as an agent-specific scaling of a \emph{balanced flow} or \emph{circulation}. A nonnegative vector $\mu=(\mu_{ij})_{(i,j)\in E}$ is a \emph{circulation} if, for every agent, the total outgoing and incoming weights are equal, i.e., $\sum_{j \neq i} \mu_{ij} = \sum_{j \neq i} \mu_{ji}$ for every $i \in N$. As we will see later in the proof of \Cref{thm:bobw-to-subsidy}, this balance property will lead to the cancellation of subsidy terms.

\begin{restatable}[Decomposition into scaling and circulation]{lemma}{DecompositionLemma}
For every $\lambda \in \R_{>0}^E$, there is a coordinatewise positive scaling vector $c \in \Rpp^n$ and a strictly positive circulation $\mu \in \R_{>0}^E$ such that $\lambda_{ij} = c_i \mu_{ij}$ for every $(i,j)\in E$. Moreover, $\mu$ can be chosen with $\sum_{(i,j) \in E} \mu_{ij} = 1$.
\label{lem:decomposition}
\end{restatable}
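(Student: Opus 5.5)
We reduce the statement to the existence of a stationary distribution of a Markov chain. Given $\lambda \in \R_{>0}^E$, set the row sums $r_i \coloneqq \sum_{j \neq i} \lambda_{ij} > 0$ and define a transition matrix $T$ on $N$ by $T_{ij} \coloneqq \lambda_{ij}/r_i$ for $j \neq i$ and $T_{ii} \coloneqq 0$. Then $T$ is row-stochastic. Because every off-diagonal entry is strictly positive and $n \geq 2$, $T$ is irreducible: every agent reaches every other agent in one step. By the Perron--Frobenius theorem for irreducible stochastic matrices, there is a unique stationary distribution $\pi$ with $\pi^\top T = \pi^\top$, and every coordinate satisfies $\pi_i > 0$.

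Next we define the candidate decomposition. Put $\mu_{ij} \coloneqq \pi_i T_{ij} = \pi_i \lambda_{ij}/r_i$ and $c_i \coloneqq r_i/\pi_i$. Both are strictly positive, and $c_i \mu_{ij} = \lambda_{ij}$ for every $(i,j) \in E$.

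We then check that $\mu$ is a circulation. The outflow at $i$ is
$$\sum_{j\neq i}\mu_{ij} = \pi_i\sum_{j\neq i}T_{ij} = \pi_i.$$
The inflow at $i$ is
$$\sum_{j\neq i}\mu_{ji} = \sum_{j}\pi_j T_{ji} = (\pi^\top T)_i = \pi_i,$$
where the middle equality uses $T_{ii} = 0$. So outflow equals inflow at every agent. Moreover, $\sum_{(i,j)\in E}\mu_{ij} = \sum_i \pi_i = 1$, which gives the normalization directly. If one preferred not to rely on this, rescaling $\mu$ by a positive constant $\kappa$ and $c$ by $1/\kappa$ preserves all the required properties.

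The only substantive step is the existence of a strictly positive stationary vector. If we want to avoid citing Perron--Frobenius, a self-contained route is available. The matrix $I - T$ is singular because $T\one = \one$, so it has a nonzero left null vector. Strict positivity of that vector then follows from a standard argument: $T$ has all off-diagonal entries positive, so $T^2$ is entrywise positive when $n \geq 3$, and for $n = 2$ one uses the explicit solution $\pi \propto (\lambda_{21}, \lambda_{12})$. Alternatively, the Markov chain tree theorem gives an explicit formula $\pi_i \propto \sum$ of weights of spanning arborescences rooted at $i$. Every such term is positive here because all arcs exist, which yields positivity immediately.
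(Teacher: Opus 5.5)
Your proposal is correct and follows the paper's proof essentially verbatim: the same row-normalized chain, the same stationary edge flow $\mu_{ij}=\pi_i T_{ij}$ and the same scaling $c_i=r_i/\pi_i$. The only difference is that the paper establishes a positive stationary vector from scratch (Ces\`aro averages of $\rho Q^t$, compactness, and a one-line positivity argument) instead of citing Perron--Frobenius.

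One slip in your optional aside: for $n=2$ your $T$ is the swap matrix, so $\pi=(1/2,1/2)$ rather than $\pi\propto(\lambda_{21},\lambda_{12})$ (the latter is the stationary law of the continuous-time chain with rates $\lambda$). Also, an arbitrary left null vector of $I-T$ still needs an argument that it can be taken nonnegative.
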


This lemma is a direct application of the standard stationary-distribution theorem for finite irreducible Markov chains; see, for example, Section~2.1.1 of \citet{AF02reversible}. The circulation guaranteed by the lemma is the chain's stationary edge flow. We include a short proof in \Cref{appendix:redn} to keep the discussion self-contained.

Let us now use \Cref{lem:weighted-dual-separation,lem:decomposition} to prove \Cref{thm:bobw-to-subsidy}.

\begin{proof}[Proof of \Cref{thm:bobw-to-subsidy}]

First, observe that $\F_P(v)$ is nonempty by considering the all-ones vector $c = \one =(1,\dots,1) \in \Rpp^n$ in the hypothesis of the theorem. 

Suppose, for contradiction, that the profile $v$ has no BoBW lottery with ex-post property $P$. By \Cref{lem:weighted-dual-separation}, there exists $\lambda \in \R_{>0}^E$ with $\sum_{e \in E} \lambda_e = 1$ such that $\lambda \cdot d(A) < 0$ for every $A \in \F_P(v)$. From \Cref{lem:decomposition}, there exists a strictly positive vector $c \in \Rpp^n$ and a strictly positive circulation $\mu \in \R_{>0}^E$ such that $\lambda_{ij} = c_i \mu_{ij}$ for every $(i,j)\in E$.

By the theorem's hypothesis, the scaled profile $c \odot v$ has an allocation $A$ that is envy-freeable and satisfies $P$. That is, there exist nonnegative subsidies $p = (p_1, \dots, p_n)$ such that
\begin{equation}
    c_i v_i(A_i) + p_i \geq c_i v_i(A_j) + p_j \qquad\text{for all } i,j \in N.
\label{eqn:reduction_helper-1}
\end{equation}
Since $P$ is scale-free, $A$ also satisfies $P$ with respect to $v$, i.e., $A \in \F_P(v)$. Moreover, from \Cref{eqn:reduction_helper-1}, we get $c_i d_{ij}(A) \geq p_j - p_i$. Multiplying these inequalities by $\mu_{ij}$ and summing over $(i,j) \in E$:
$$
\begin{aligned}
  \lambda \cdot d(A)
    & = \sum_{(i,j) \in E} \mu_{ij} c_i d_{ij}(A)\\
    & \geq \sum_{(i,j) \in E} \mu_{ij}(p_j - p_i)\\
    & = \sum_{k \in N} p_k \left(\sum_{i \neq k} \mu_{ik} - \sum_{j \neq k} \mu_{kj} \right)\\
    & = 0.
\end{aligned}
$$
The last equality follows from the balanced flow property of the circulation $\mu$. This contradicts the dual separation condition $\lambda \cdot d(A) < 0$ derived earlier. Therefore, the primal optimum is nonnegative, and an ex-ante envy-free and ex-post $P$ lottery exists.
\end{proof}

\begin{corollary}[Small support]
For any valuation profile $v$, any ex-post property $P$ and any lottery over $\F_P(v)$, there exists another lottery over $\F_P(v)$ with the same expected negative envy vector and support of size at most $n(n-1) + 1$. In particular, whenever an ex-ante envy-free and ex-post $P$ lottery exists, there also exists one with this support bound.
\label{cor:small-support}
\end{corollary}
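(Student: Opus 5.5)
The plan is to apply Carathéodory's theorem in $\R^{K}$, with $K = n(n-1)$. Fix a profile $v$, a property $P$, and a lottery $q$ over $\F_P(v)$. Its expected negative envy vector is
$$
\bar d \coloneqq \sum_{A \in \supp(q)} q_A\, d(A) \in \R^E.
$$
This is a convex combination of the finitely many points $\{d(A) : A \in \supp(q)\} \subseteq \R^E \cong \R^K$, so $\bar d$ lies in their convex hull. Carathéodory's theorem then says that $\bar d$ is a convex combination of at most $K+1 = n(n-1)+1$ of these points.

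There is one subtlety: several allocations may share the same vector $d(A)$. I will handle this by choosing, for each selected point, one representative allocation $A \in \supp(q)$ and placing the corresponding Carathéodory weight on that allocation. The resulting weights $q'_A$ are nonnegative, sum to $1$, and are supported on at most $n(n-1)+1$ allocations. Since $\supp(q') \subseteq \supp(q) \subseteq \F_P(v)$, the new lottery is still ex-post $P$. It also satisfies $\sum_A q'_A d(A) = \bar d$ by construction, so it has the same expected negative envy vector.

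For completeness, I would recall the standard reduction step behind Carathéodory's theorem, although citing it should suffice. Suppose $\bar d = \sum_{k=1}^{m} \theta_k x_k$ with $m > K+1$ and all $\theta_k > 0$. Then the points $x_k$ are affinely dependent, so there is a nonzero $\beta$ with $\sum_k \beta_k x_k = 0$ and $\sum_k \beta_k = 0$. Moving $\theta$ along $-\beta$ until the first coordinate hits zero keeps the weights nonnegative, keeps their sum equal to $1$, and keeps $\bar d$ fixed, while removing one point. Iterating this step terminates at support size at most $K+1$.

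For the "in particular" clause, ex-ante envy-freeness is exactly the condition $\bar d \geq 0$ coordinatewise. That condition depends only on $\bar d$, so it is preserved by the new lottery. I do not expect a real obstacle in any of this. The only point that needs care is correctly identifying the ambient dimension as $K = |E|$, with only off-diagonal comparisons counted, since that is what gives the bound $n(n-1)+1$.
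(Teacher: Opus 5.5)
Your proposal is correct and follows the paper's proof: the paper also applies Carathéodory's theorem to the negative envy vectors in $\R^E$, working with a minimum-support lottery inside $\supp(q)$. It uses the same reduction step, namely linear dependence of the augmented vectors $(d(A^t),1)$ in $\R^{K+1}$ followed by shifting weight until one coordinate vanishes. One detail you should state explicitly is that $\beta \neq 0$ together with $\sum_k \beta_k = 0$ forces some $\beta_k > 0$, which is what guarantees that moving along $-\beta$ actually drives a coordinate to zero.
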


\begin{proof}
The argument follows from Carath\'eodory's theorem applied to the negative envy vectors in $\R^E$. 

Fix a lottery over $\F_P(v)$, and choose, among lotteries supported on its support with the same expected negative envy vector, one with minimum support size $s$. Denote its support as $A^1, \dots, A^s$ and their positive probabilities as $q_1, \dots, q_s$. 

If $s > K + 1$, then the augmented vectors $(d(A^t), 1)$ in $\R^{K+1}$ are linearly dependent. Thus, some nonzero $(a_1, \dots, a_s)$ satisfies $\sum_{t = 1}^s a_t d(A^t) = 0$ and $\sum_{t = 1}^s a_t = 0$. In particular, some $a_t$ is positive. 

Let $\tau \coloneqq \min_{t: a_t > 0} q_t/a_t$ and $q'_t \coloneqq q_t - \tau a_t$. The vector $q' \coloneqq (q'_1, \dots, q'_s)$ can be interpreted as a probability vector since its entries are nonnegative and add up to $1$. Furthermore, $q'$ preserves the expected negative envy vector, and at least one of its coordinates is zero without any previously-zero entry becoming positive. This contradicts the minimality of $s$. Hence, $s \leq K + 1 = n(n-1) + 1$. 
\end{proof}

\begin{corollary}[Reduction on a scale-free domain]
Let $\mathcal{V}$ be a scale-free valuation domain and let $P$ be a scale-free ex-post property. If every profile in $\mathcal{V}$ admits an allocation that is both $P$ and envy-freeable, then every profile in $\mathcal{V}$ admits an ex-ante envy-free and ex-post $P$ lottery supported on at most $n(n-1)+1$ allocations.
\label{cor:domain-reduction}
\end{corollary}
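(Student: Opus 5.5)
The corollary follows by chaining \Cref{thm:bobw-to-subsidy} with \Cref{cor:small-support}; the only real work is checking that the domain hypothesis delivers the per-profile hypothesis of the theorem at every positive rescaling.

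Fix an arbitrary profile $v \in \mathcal{V}$ and an arbitrary coordinatewise positive vector $c \in \Rpp^n$. Since $\mathcal{V}$ is scale-free, $c \odot v \in \mathcal{V}$. The corollary assumes that every profile in $\mathcal{V}$ admits an allocation that is both $P$ and envy-freeable, so $c \odot v$ does. As $c$ was arbitrary, $v$ satisfies the hypothesis of \Cref{thm:bobw-to-subsidy}. Because $P$ is scale-free, \Cref{thm:bobw-to-subsidy} then gives an ex-ante envy-free lottery $q$ supported on $\F_P(v)$.

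Next apply \Cref{cor:small-support} to $q$. It produces a lottery $q'$ over $\F_P(v)$ with the same expected negative envy vector and support of size at most $n(n-1)+1$. Ex-ante envy-freeness of $q$ means its expected negative envy vector is coordinatewise nonnegative. Since $q'$ has the same vector, $q'$ is also ex-ante envy-free. Its support lies in $\F_P(v)$, so $q'$ is ex-post $P$.

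The one point to verify is the use of domain scale-freeness: the theorem needs envy-freeable $P$-allocations at every rescaled profile, not only at $v$ itself. This is exactly why the corollary assumes the existence statement for all profiles in a scale-free domain, so that the rescaled profiles stay inside $\mathcal{V}$. Nothing else is needed.
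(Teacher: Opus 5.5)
Your proposal is correct and follows the paper's proof exactly: domain scale-freeness puts every $c \odot v$ in $\mathcal{V}$, which supplies the hypothesis of \Cref{thm:bobw-to-subsidy}, and \Cref{cor:small-support} then gives the support bound. Your added remark that the reduced lottery has the same expected negative envy vector (so it stays ex-ante envy-free) and has support inside $\F_P(v)$ (so it stays ex-post $P$) simply makes explicit a step the paper leaves implicit.
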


\begin{proof}
Fix a profile $v \in \mathcal{V}$. For every $c \in \Rpp^n$, scale-freeness of the domain gives $c \odot v \in \mathcal{V}$. Therefore, each scaled profile $c \odot v$ admits some allocation that is $P$ and envy-freeable. From \Cref{thm:bobw-to-subsidy}, we obtain the existence of a lottery satisfying ex-ante envy-freeness and ex-post $P$ with respect to $v$. The support size bound follows from \Cref{cor:small-support}.
\end{proof}

Note that our analysis does not use additivity %
or any sign restriction on the valuations, and therefore applies to goods, chores, and mixed instances, with the appropriate choice of $P$.

\section{Consequences of Our Reduction}
\label{sec:consequences}

\subsection{Recovering additive BoBW existence}
A useful consequence of our main reduction (\Cref{thm:bobw-to-subsidy}) is that we obtain alternative existence proofs of BoBW lotteries for additive goods, additive chores, and additive mixtures of goods and chores.

\begin{corollary}[Additive BoBW]
Every additive goods-only or chores-only instance admits a lottery that is ex-ante EF and ex-post balanced and EF1. Furthermore, every additive instance with mixed goods and chores admits a lottery that is ex-ante EF and ex-post EF1.
\label{cor:additive-bobw}
\end{corollary}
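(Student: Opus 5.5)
The plan is to apply \Cref{cor:domain-reduction} three times, once for each setting, taking the domain $\mathcal{V}$ to be additive goods, additive chores, or additive mixed manna, respectively. For goods and chores we take $P$ to be ``balanced and EF1''; for mixed manna we take $P$ to be EF1 in the mixed-manna sense from \Cref{sec:Preliminaries}. Applying the corollary requires three checks: each domain is scale-free, each property is scale-free, and every profile in the domain admits an allocation that satisfies $P$ and is envy-freeable. The corollary then supplies the desired lottery, supported on at most $n(n-1)+1$ allocations.

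We first verify scale-freeness of the domains. Multiplying agent $i$'s item values by $c_i>0$ keeps the valuation additive and preserves the sign of every item value. Hence nonnegative profiles stay nonnegative, nonpositive profiles stay nonpositive, and an arbitrary-sign additive profile stays an additive mixed profile.

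Scale-freeness of the properties is equally direct. Balancedness depends only on bundle sizes. EF1, in both the goods/chores form and the mixed-manna form, consists of inequalities of the form $v_i(X)\ge v_i(Y)$ that each involve a single agent $i$. Each such inequality is unchanged when both sides are multiplied by $c_i>0$, and the sign of each item value, which determines whether a removal is allowed, is also unchanged. Therefore $\F_P(c\odot v)=\F_P(v)$.

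For the existence of $P$-and-envy-freeable allocations we cite the literature, exactly as the introduction indicates. For additive goods, \citet{BDN+20one} give an algorithm that outputs a balanced, EF1, and envy-freeable allocation for every additive goods profile. For additive chores, \citet{WZZ23} establish compatibility of EF1 and envy-freeability. For mixed manna, \citet{ALMS25mixed} show that an EF1 and envy-freeable allocation always exists.

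The main obstacle is the chores case, where the corollary claims balancedness as well. I would need to check that the chores construction of \citet{WZZ23} (or a matching-based variant of the \citet{BDN+20one} algorithm applied to chores) also yields a balanced allocation. If the cited result does not state balancedness explicitly, I would reproduce the argument of \citet{BDN+20one} with costs in place of values. That argument allocates items in rounds, each round assigning one item per agent through a maximum-weight matching, so balancedness is automatic, and the EF1 and envy-freeability arguments go through symmetrically with the inequalities reversed. Once all three hypotheses are verified, \Cref{cor:domain-reduction} completes the proof.
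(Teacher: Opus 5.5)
Your plan follows the paper's proof. The paper cites \citet{BDN+20one} for goods, the chores adaptation of that algorithm observed by \citet{WZZ23}, and \citet{ALMS25mixed} for mixed manna. It notes that balancedness and EF1 are scale-free and then invokes the reduction. Routing through \Cref{cor:domain-reduction} instead of \Cref{thm:bobw-to-subsidy} makes no difference. The paper takes the balanced chores allocation from that cited adaptation, so your primary route closes the same way.

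Your fallback for chores does not work as stated. Running the goods algorithm verbatim on chores means each round takes a maximum-weight matching among those of maximum cardinality; the cardinality requirement is needed, since otherwise the empty matching is optimal. This only guarantees $v_i(c_i^t)\ge v_i(g)$ for every chore $g$ left to later rounds, i.e., each agent's round-$t$ chore is no worse than later chores. For goods that is the useful direction, and the short final round is harmless. For chores, an agent who receives an extra chore in the short final round can violate EF1.

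Example: take two agents and chores $x,y,z$ with $v_1=(-1,-2,-10)$ and $v_2=(-1,-100,-100)$. Round 1 gives $y$ to agent 1 and $x$ to agent 2 (weight $-3$), and round 2 gives $z$ to agent 1. Removing either chore from $\{y,z\}$ leaves agent 1 with $-2$ or $-10$, both below $v_1(\{x\})=-1$. So ``the EF1 argument goes through symmetrically'' is false.

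The fix is to run the incomplete round first: match exactly $m-n(\lceil m/n\rceil-1)$ chores in round 1, then use full rounds. Equivalently, pad with zero-valued dummy chores placed in round 1. Then removing agent $i$'s last chore gives $\sum_{t<T}v_i(c_i^t)\ge\sum_{t\ge2}v_i(c_j^t)\ge v_i(A_j)$. Envy-freeability still holds, because permuting bundles induces, round by round, a matching of the same chores with the same cardinality, so condition (iii) of \Cref{thm:efable-char} applies. Balancedness is automatic.
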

\begin{proof}
\citet{BDN+20one} provided an algorithm to compute an allocation that is balanced, EF1, and envy-freeable for additive goods. Later,~\citet{WZZ23} observed that this algorithm can be adapted for additive chores. Balancedness and EF1 are scale-free, and therefore, it follows from \Cref{thm:bobw-to-subsidy} that every additive goods (respectively, chores) instance admits a lottery that is ex-ante EF and ex-post balanced and EF1. This recovers the existential guarantee of~\citet{AFS+24best} for goods-only and chores-only problems.

For the case of mixtures of goods and chores,~\citet{ALMS25mixed} showed the existence of allocations that are simultaneously EF1 and envy-freeable. Using \Cref{thm:bobw-to-subsidy}, we recover the corresponding BoBW result of~\citet{ABLMSTW26mixed}.
\end{proof}

\subsection{A converse for two agents}

\begin{theorem}[Two-agent converse]
Let $n=2$, and let $q$ be an ex-ante EF lottery. Then, for every $c \in \R_{>0}^2$, some $A \in \supp(q)$ is envy-freeable at $c \odot v$. 
Furthermore, for a scale-free property $P$, a lottery that is ex-ante EF and ex-post $P$ exists if and only if every positive rescaling admits an allocation that is simultaneously $P$ and envy-freeable.
\label{thm:two-agent}
\end{theorem}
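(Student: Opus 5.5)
For two agents, \Cref{thm:efable-char} makes envy-freeability very concrete. The only directed cycles are $1\to2\to1$, so an allocation $A$ is envy-freeable at $c\odot v$ if and only if
$$c_1\bigl(v_1(A_2)-v_1(A_1)\bigr)+c_2\bigl(v_2(A_1)-v_2(A_2)\bigr)\le 0.$$
Equivalently, in terms of the negative envy vector, the condition is $c_1 d_{12}(A)+c_2 d_{21}(A)\ge 0$. My plan is to prove the first claim by averaging this linear functional over the lottery, and then to assemble the equivalence from this claim together with \Cref{thm:bobw-to-subsidy}.

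\textbf{First claim.} Fix $c\in\R_{>0}^2$ and an ex-ante EF lottery $q$. Ex-ante envy-freeness says that $\sum_A q_A d_{12}(A)\ge 0$ and $\sum_A q_A d_{21}(A)\ge 0$. Multiplying the first inequality by $c_1>0$, the second by $c_2>0$, and adding gives
$$\sum_{A\in\supp(q)} q_A\bigl(c_1 d_{12}(A)+c_2 d_{21}(A)\bigr)\ge 0.$$
Because the $q_A$ are positive on the support and sum to $1$, this is a convex combination with nonnegative value. So at least one term satisfies $c_1 d_{12}(A)+c_2 d_{21}(A)\ge 0$, and by the criterion above that $A$ is envy-freeable at $c\odot v$. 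This is the $n=2$ shadow of the circulation argument: any positive edge weights on the two-edge graph already have the form $c_i\mu_{ij}$ with $\mu_{12}=\mu_{21}$.

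\textbf{Equivalence.} The ``if'' direction is exactly \Cref{thm:bobw-to-subsidy} applied to $v$. For the ``only if'' direction, take an ex-ante EF lottery $q$ with $\supp(q)\subseteq\F_P(v)$. For each $c$, the first claim produces some $A\in\supp(q)$ that is envy-freeable at $c\odot v$. Since $P$ is scale-free, $A\in\F_P(v)=\F_P(c\odot v)$, so $A$ is simultaneously $P$ and envy-freeable at $c\odot v$.

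The only step needing care is the cycle characterization for $n=2$. Condition (ii) of \Cref{thm:efable-char} reduces to the single inequality above because the only simple cycle has length $2$ and the weight of any longer closed walk is a multiple of it. Alternatively, I can check directly that the subsidies $p_2=\max\{0,\,c_1d\text{-envy}\}$ work when this inequality holds. Once that is settled, there is no real obstacle.
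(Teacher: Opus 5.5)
Your proposal is correct and follows the paper's proof essentially step for step. Both arguments use the same ingredients: average $c_1 d_{12}(A)+c_2 d_{21}(A)$ over the lottery to find a support allocation whose only envy cycle has nonpositive weight at $c\odot v$, and invoke the cycle characterization of envy-freeability. Both then get the ``if'' direction from the general reduction and the ``only if'' direction from the first claim together with scale-freeness of $P$.
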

\begin{proof}
Ex-ante envy-freeness of the lottery $q$ implies that $\mathbb{E}_q[c_1 d_{12}(A) + c_2 d_{21}(A)] \geq 0$. Therefore, some allocation $A$ in the support satisfies $c_1 d_{12}(A) + c_2 d_{21}(A) \geq 0$. This condition implies that the only directed envy cycle has nonpositive weight in the scaled profile. From \Cref{thm:efable-char}, it follows that $A$ is envy-freeable.

For the second part of the claim, the `if' direction follows from \Cref{thm:bobw-to-subsidy}. The `only if' direction follows from the observation that if the support satisfies a scale-free property $P$, then it continues to do so after rescaling.
\end{proof}

It is relevant to note that the order of quantifiers in \Cref{thm:two-agent} cannot be switched. Specifically, the claim holds for $\forall c \, \exists A$, and not for $\exists A \, \forall c$. Indeed, consider an instance with two agents who each value one good at $1$. Allocating the good uniformly to either agent is ex-ante EF. Assigning the item deterministically to agent $1$ is envy-freeable only when $c_1 \geq c_2$, while assigning the item to agent $2$ is envy-freeable only when $c_2 \geq c_1$.

\begin{corollary}\label{cor:two-agent corollary}
Any two-agent instance with additive goods admits an allocation that is EFX$_{+}$, PO, and envy-freeable.
\end{corollary}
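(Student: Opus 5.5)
The plan is to combine the two-agent converse (\Cref{thm:two-agent}) with a known best-of-both-worlds guarantee for two agents. \citet{Aziz26po} proved that for two agents with additive goods there is always a lottery that is exactly ex-ante envy-free and whose support consists only of allocations that are EFX$_{+}$ and PO. Fix such an instance $v$ and such a lottery $q$.

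Apply the first part of \Cref{thm:two-agent} with the trivial scaling $c=\one$. Then some allocation $A\in\supp(q)$ is envy-freeable at $v$ itself. Since every allocation in the support is EFX$_{+}$ and PO, this $A$ is simultaneously EFX$_{+}$, PO, and envy-freeable. For completeness, I would spell out the mechanism behind \Cref{thm:two-agent} in this case:
\[
\mathbb{E}_q\bigl[d_{12}(A)+d_{21}(A)\bigr]\ge 0 .
\]
Hence some support allocation has a nonpositive-weight $2$-cycle in its envy graph, and it is envy-freeable by \Cref{thm:efable-char}. No scale-freeness is needed here because we never rescale. Scale-freeness of EFX$_{+}$ and PO would only matter if we wanted the statement at every rescaling, and it holds anyway, since additive scaling of one agent preserves which items are positively valued and preserves Pareto comparisons.

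The only real dependency, and thus the main point of care, is the input BoBW result: it must give \emph{exact} ex-ante EF, not an approximation, together with EFX$_{+}$+PO for every realized allocation. If I wanted to avoid citing it, my fallback would be a direct argument. For two agents, envy-freeability is equivalent to
\[
v_1(A_1)+v_2(A_2)\ge v_1(A_2)+v_2(A_1).
\]
So it suffices to find an EFX$_{+}$+PO allocation that does not lose welfare by swapping bundles. I would try a maximum-utilitarian-welfare allocation among EFX$_{+}$+PO allocations, and argue that if swapping helped, then the swapped allocation (possibly after a Pareto improvement) would also be EFX$_{+}$+PO. This step seems delicate, which is why the route through \Cref{thm:two-agent} and the known lottery is the one I would commit to.
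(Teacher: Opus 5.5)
Your proposal is correct and matches the paper's proof. You take the two-agent lottery of \citet{Aziz26po}, which is exactly ex-ante EF and ex-post EFX$_{+}$+PO, and apply \Cref{thm:two-agent} at $c=\one$; your explicit $2$-cycle argument is exactly the mechanism behind that theorem.
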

\begin{proof}
For two agents,~\citet{Aziz26po} showed the existence of an ex-ante EF and ex-post EFX$_{+}$ and PO lottery. \Cref{thm:two-agent} implies that every such instance admits an allocation that is EFX$_{+}$, PO and envy-freeable.
\end{proof}

\subsection{Weighted entitlements}

We will now show that our general reduction extends to fair division with unequal entitlements. Towards this goal, we will use the characterization of weighted envy-freeability due to~\citet{AHKS+25Subsidy}. It uses the notion of \emph{weighted envy graph} of allocation $A$, which is the complete directed graph on $N$ with the weight of edge $(i,j)$ given by $v_i(A_j)/w_j - v_i(A_i)/w_i$.

\begin{theorem}[Characterisation of weighted envy-freeability;~\citep{AHKS+25Subsidy}]
For every allocation $A$ and arbitrary real-valued valuations, the following are equivalent.
\begin{enumerate}[label=\textup{(\roman*)},leftmargin=*]
  \item $A$ is weighted envy-freeable.
  \item The weighted envy graph of $A$ has no positive-weight directed cycle.
  \item $A$ satisfies weighted reassignment stability, i.e., for every permutation $\sigma$ of $N$, it holds that 
  $$\sum_{i \in N} \frac{v_i(A_i)}{w_i} \geq \sum_{i \in N} \frac{v_i(A_{\sigma(i)})}{w_{\sigma(i)}}.$$
\end{enumerate}
\label{thm:efable-char-weighted}
\end{theorem}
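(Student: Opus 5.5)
This is the weighted analogue of \Cref{thm:efable-char}, and the plan is the standard cycle/potential argument carried out in the weighted envy graph. First I normalize. For each agent $i$ set $u_i \coloneqq v_i(A_i)/w_i$, and for $j \neq i$ set $x_{ij} \coloneqq v_i(A_j)/w_j$. With the substitution $q_i \coloneqq p_i/w_i$, weighted envy-freeability asks for $q \in \R_{\ge 0}^n$ with
\[
u_i + q_i \ge x_{ij} + \frac{p_j}{w_j} \cdot \frac{w_i}{w_i}\ ?
\]
This form does not separate cleanly, so I write the inequality directly:
\[
\frac{v_i(A_i)+p_i}{w_i} \ge \frac{v_i(A_j)+p_j}{w_j}, \quad\text{equivalently}\quad \frac{p_j}{w_j}-\frac{p_i}{w_i} \le \frac{v_i(A_i)}{w_i}-\frac{v_i(A_j)}{w_j}.
\]
So with potentials $\pi_i \coloneqq p_i/w_i$ the constraint reads $\pi_j - \pi_i \le -\omega(i,j)$, where $\omega(i,j) = v_i(A_j)/w_j - v_i(A_i)/w_i$ is exactly the weighted envy-graph edge weight. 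Equivalently, $\pi_i \ge \pi_j + \omega(i,j)$. This is a system of difference constraints, and $\pi \ge 0$ if and only if $p \ge 0$ because $w_i > 0$.

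For (i)$\Rightarrow$(ii), I sum $\pi_i - \pi_j \ge \omega(i,j)$ around any directed cycle. The potentials telescope to $0$, so every cycle has weight at most $0$.

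For (ii)$\Rightarrow$(i), I define $\pi_i$ as the maximum weight of a directed path starting at $i$, including the empty path of weight $0$. Because there is no positive cycle, this maximum can be taken over simple paths and is therefore finite. It is also nonnegative. For every edge, $\pi_i \ge \omega(i,j) + \pi_j$. This is the step that needs the most care: prepending the edge $(i,j)$ to a maximum path from $j$ may create a walk that revisits $i$, and then the bound must come from the walk's value rather than from a simple path. The fix is that any walk decomposes into a simple path plus cycles, and each cycle has nonpositive weight, so the walk's weight is at most that of some simple path from $i$. Setting $p_i \coloneqq w_i \pi_i$ then gives the required subsidies.

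For (ii)$\Leftrightarrow$(iii), I use the identity that for a permutation $\sigma$,
\[
\sum_i \Bigl(\frac{v_i(A_{\sigma(i)})}{w_{\sigma(i)}} - \frac{v_i(A_i)}{w_i}\Bigr)
\]
equals the total weight of the disjoint cycles of $\sigma$ in the weighted envy graph, since fixed points contribute $0$. If no cycle is positive, every permutation sum is nonpositive, which is (iii). Conversely, a simple positive cycle defines a cyclic permutation $\sigma$ that violates (iii). Non-simple cycles are handled by the observation that a positive closed walk contains a positive simple cycle.
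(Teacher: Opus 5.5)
Your proof is correct and complete. The substitution $\pi_i = p_i/w_i$ turns weighted envy-freeability into the difference constraints $\pi_i - \pi_j \ge \omega(i,j)$, and telescoping, max-path potentials and the cycle decomposition of $\sigma$ then give the three implications as in the unweighted case. The paper itself gives no proof and simply cites \citet{AHKS+25Subsidy}. Your argument is the standard one, and it matches both the max-path subsidy recipe stated with \Cref{thm:efable-char} and the normalization $s_i = p_i/w_i$ used in the proof of \Cref{thm:weighted}. You should delete the aborted first display, since $u_i + q_i \ge x_{ij} + q_j$ was already the correct separated form.
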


We will now state and prove the weighted extension of our reduction.

\begin{theorem}[Weighted reduction]
Let $(w_1,\dots,w_n)$ denote the positive entitlements of the $n$ agents. Let $P$ be invariant under positive agent-specific utility rescaling. If every positive rescaling of $v$ admits a $P$-allocation that is weighted envy-freeable, then $v$ admits an ex-ante WEF lottery supported on $P$-allocations. Furthermore, for two agents, the corresponding converse statement holds.
\label{thm:weighted}
\end{theorem}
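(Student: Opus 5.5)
The plan is to mirror the proof of \Cref{thm:bobw-to-subsidy}, replacing the negative envy vector by a weighted one. For each allocation $A$ and $(i,j)\in E$, define
$$
d^w_{ij}(A) \coloneqq \frac{v_i(A_i)}{w_i} - \frac{v_i(A_j)}{w_j}.
$$
A lottery is ex-ante WEF if and only if its expected weighted negative envy vector is coordinatewise nonnegative, because expectation is linear and the weights are constants. \Cref{lem:weighted-dual-separation} uses nothing about the form of $d$ beyond its being a finite family of vectors indexed by $\F_P(v)$. Its LP-duality-and-perturbation argument therefore applies verbatim with $d^w$ in place of $d$. Hence, if $\F_P(v)\neq\emptyset$ and no ex-ante WEF ex-post $P$ lottery exists, we obtain $\lambda\in\R^E_{>0}$ with $\lambda\cdot d^w(A)<0$ for all $A\in\F_P(v)$. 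Nonemptiness of $\F_P(v)$ follows from the hypothesis with $c=\one$.

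Next, apply \Cref{lem:decomposition} to write $\lambda_{ij}=c_i\mu_{ij}$, where $c\in\Rpp^n$ and $\mu$ is a positive circulation. By hypothesis, $c\odot v$ has an allocation $A$ that satisfies $P$ and is weighted envy-freeable with some subsidies $p\ge 0$:
$$
\frac{c_iv_i(A_i)+p_i}{w_i}\ge\frac{c_iv_i(A_j)+p_j}{w_j}.
$$
Rearranging gives $c_i d^w_{ij}(A)\ge p_j/w_j - p_i/w_i$. This is exactly the key point: the subsidy terms enter through the potentials $\pi_k\coloneqq p_k/w_k$, so the right-hand side has the form $\pi_j-\pi_i$. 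Multiplying by $\mu_{ij}$ and summing over $E$ gives
$$
\lambda\cdot d^w(A)\ge\sum_k \pi_k\Big(\sum_{i\ne k}\mu_{ik}-\sum_{j\ne k}\mu_{kj}\Big)=0.
$$
Since $A\in\F_P(c\odot v)=\F_P(v)$ by scale-freeness, this contradicts separation. The only step needing care is this rearrangement: the scaling must sit on $v_i$ inside the weighted comparison, so $c_i$ multiplies both $v_i(A_i)/w_i$ and $v_i(A_j)/w_j$. With that placement the argument goes through cleanly.

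For the two-agent converse, let $q$ be ex-ante WEF and fix $c\in\R^2_{>0}$. Then $\mathbb{E}_q[c_1d^w_{12}(A)+c_2d^w_{21}(A)]\ge 0$, so some $A\in\supp(q)$ satisfies $c_1d^w_{12}(A)+c_2d^w_{21}(A)\ge0$. This says the unique 2-cycle of the weighted envy graph of $A$ at $c\odot v$ has nonpositive weight. By \Cref{thm:efable-char-weighted}, $A$ is weighted envy-freeable at $c\odot v$. Since $A$ is in the support, it satisfies $P$ at $v$, hence at $c\odot v$ by scale-freeness. This gives the `only if' direction, and the `if' direction is the first part of the theorem.
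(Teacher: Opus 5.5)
Your proposal is correct and follows essentially the same argument as the paper. You run the LP separation of \Cref{lem:weighted-dual-separation} on the weighted vectors $d^w$, decompose $\lambda_{ij}=c_i\mu_{ij}$, and use the potentials $p_k/w_k$ so the subsidy terms cancel along the circulation; your two-agent converse, via the single $2$-cycle and \Cref{thm:efable-char-weighted}, is likewise the paper's argument.
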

\begin{proof}
Define $d^w_{ij}(A) = v_i(A_i)/w_i - v_i(A_j)/w_j$. The LP and its dual in \Cref{lem:weighted-dual-separation} apply to these weighted negative envy vectors. Positive rescaling multiplies row $i$ by $c_i$. Given a decomposition $\lambda_{ij} = c_i \mu_{ij}$ and subsidies for the scaled profile, let $s_i \coloneqq p_i/w_i$. The weighted subsidy inequalities imply $c_id^w_{ij}(A) \geq s_j - s_i$. Summing the inequalities edgewise over the circulation gives $\lambda \cdot d^w(A) \geq 0$, contradicting the dual separation condition. This proves the BoBW-to-subsidy reduction for the weighted setting.

For two agents, fix any $c \in \mathbb{R}^2_{>0}$ and any ex-ante WEF lottery $q$. As in~\Cref{thm:two-agent}, we have $\mathbb{E}_q[c_1 d^w_{12}(A) + c_2 d^w_{21}(A)] \geq 0$. This implies that some allocation $A$ in the support satisfies $c_1 d^w_{12}(A) + c_2 d^w_{21}(A) \geq 0$, which implies that any directed cycle in the weighted envy graph of $A$ at the scaled profile $c \odot v$ has a nonpositive weight. From \Cref{thm:efable-char-weighted}, it follows that allocation $A$ is weighted envy-freeable at that scale.
\end{proof}

\begin{corollary}
For two agents with fixed positive entitlements and additive mixed-manna valuations, a WEF1T and weighted envy-freeable allocation exists.
\label{cor:weighted-mixed}
\end{corollary}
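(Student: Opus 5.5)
The plan is to apply the two-agent converse in \Cref{thm:weighted}, mirroring the proof of \Cref{cor:two-agent corollary}. \citet{LLTZ26weightedmixed} showed that every additive mixed-manna instance with fixed positive entitlements admits a lottery that is ex-ante WEF and ex-post WEF1T; this extends the goods result of \citet{AGM23BoBW}. Fix such an instance with $n=2$, and let $q$ be this lottery, so every allocation in $\supp(q)$ is WEF1T at $v$.

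The key step is the two-agent converse itself, taken at the trivial scale $c=\one$. Ex-ante WEF of $q$ gives $\mathbb{E}_q[d^w_{12}(A)+d^w_{21}(A)]\geq 0$. Hence some $A\in\supp(q)$ satisfies $d^w_{12}(A)+d^w_{21}(A)\geq 0$. With two agents the only directed cycle in the weighted envy graph is $1\to2\to1$, so this inequality says that cycle has nonpositive weight. By \Cref{thm:efable-char-weighted}, $A$ is therefore weighted envy-freeable at $v$.

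Since $A\in\supp(q)$, it is WEF1T at $v$, and the corollary follows. Because we only use $c=\one$, we never need WEF1T to be scale-free; it suffices that the support consists of WEF1T allocations of the original profile.

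The main point to check is that the cited result really gives exact ex-ante WEF together with ex-post WEF1T for mixed manna under arbitrary positive weights, including the two-agent case. If that result were stated only for goods, I would instead invoke \citet{AGM23BoBW} and note that the argument above is otherwise independent of the sign of the item values.
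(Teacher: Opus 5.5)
Your proposal is correct and is essentially the paper's proof: take the ex-ante WEF, ex-post WEF1T lottery of \citet{LLTZ26weightedmixed} and apply the two-agent converse of \Cref{thm:weighted}, which you unpack correctly at $c=\one$ via \Cref{thm:efable-char-weighted}, to find a weighted envy-freeable allocation in its support. The only quibble is your fallback, which would not work: a goods-only result such as \citet{AGM23BoBW} supplies no lottery for mixed-manna instances, so it cannot yield this corollary. It is also unnecessary, since \citet{LLTZ26weightedmixed} already covers mixed manna.
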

\begin{proof}
For additive mixed manna,~\citet{LLTZ26weightedmixed} proved the existence of a lottery that is ex-ante weighted envy-free and ex-post WEF1T. Applying \Cref{thm:weighted}, every two-agent additive instance with mixed goods and chores admits an allocation that is simultaneously WEF1T and weighted envy-freeable.
\end{proof}

\subsection{Three-agent lexicographic goods}

For additive valuations,~\citet{Aziz26po} showed that an ex-ante EF and ex-post EFX and PO lottery can fail to exist for three agents. In this section, we will show that such a lottery always exists for three agents with lexicographic valuations.  

\begin{theorem}[Three-agent lexicographic existence]
Any goods instance with three agents and lexicographic valuations admits an allocation that is simultaneously EFX, PO, and envy-freeable. Consequently, an ex-ante EF lottery over EFX+PO allocations exists.
\label{thm:lex-three}
\end{theorem}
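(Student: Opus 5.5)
By \Cref{cor:domain-reduction}, the second sentence of the theorem follows from the first. The lexicographic domain is scale-free, and so are EFX, PO and their conjunction. So all the work is in showing that every three-agent lexicographic profile has an allocation that is EFX, PO and envy-freeable. Since rescaling changes the numbers but not the orderings $\succ_i$, I will verify envy-freeability through condition (ii) of \Cref{thm:efable-char}: every directed cycle in the envy graph must have nonpositive weight.

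\textbf{Step 1: structure of EFX under lexicographic valuations.} If $i$ envies $j$, then the top good of $A_i \triangle A_j$ under $\succ_i$ lies in $A_j$. If $|A_j| \geq 2$, deleting a good of $A_j$ other than $i$'s favourite in $A_j$ leaves that favourite in place, so $i$ still envies $j$ and EFX fails. Hence, in an EFX allocation, every envied bundle is a singleton $\{g\}$ with $g \succ_i h$ for all $h \in A_i$. Adding PO rules out mutual envy: if $i$ and $j$ envy each other, both bundles are singletons and swapping them is a Pareto improvement. So the envy graph restricted to envy edges is acyclic on pairs. The dangerous cycles are then the two directed $3$-cycles, and $2$-cycles $i \to j \to i$ with a single envy edge whose gain $v_i(g) - v_i(A_i)$ exceeds $v_j(A_j) - v_j(g)$.

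\textbf{Step 2: a candidate allocation.} Among all EFX+PO allocations, which are nonempty for lexicographic goods (e.g.\ by the algorithm of \citet{HSV+21fair}), I pick one that maximizes utilitarian welfare $\sum_i v_i(A_i)$. By condition (iii) of \Cref{thm:efable-char}, it suffices to show that no permutation $\sigma$ of the bundles strictly increases welfare. Suppose some $\sigma$ does. I split into cases by cycle type: a transposition of two agents, or a $3$-cycle of bundles. In each case I use Step 1 to argue that at least one bundle moved by $\sigma$ is an envied singleton $\{g\}$.

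\textbf{Step 3: repairing a welfare-improving permutation.} Starting from $A^\sigma$, I apply Pareto improvements while staying in the EFX region, to reach an allocation $A'$ that is EFX+PO and has welfare strictly above $A$. That contradicts the choice of $A$. For the improvements I plan to use the lexicographic characterization of PO: no agent should prefer a good held by another agent who ranks it lower, unless compensated, which for lexicographic valuations reduces to local improving trade chains. With only three agents, the envied bundles are at most two singletons, and I expect a finite case analysis on which agents hold singletons and who envies whom. In that analysis, the agent receiving the larger bundle under $\sigma$ is dominated by the top good of the symmetric difference, which should control both EFX and the welfare comparison.

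\textbf{Expected obstacle.} Step 3 is the main difficulty. A permuted allocation need not be EFX, and Pareto improvements can break EFX, so welfare maximization over EFX+PO may not be directly closed under these moves. If the repair fails, I will instead construct the allocation explicitly. Fix the order in which agents' top goods are served, as in a picking sequence: the agent whose top good is uncontested keeps a large bundle, and contested top goods are given as singletons. I would then choose the priority order depending on the scale, for instance letting the singleton go to the agent with the largest scaled value for it. That makes each single-envy-edge $2$-cycle nonpositive. The $3$-cycles would be checked by the same dominance argument, since each $3$-cycle contains at most one envy edge by Step 1 and PO.
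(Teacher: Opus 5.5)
\textbf{The main strategy fails.} Your reduction of the second sentence to the first is fine, and so are the EFX and PO facts in Step~1. The problem is the claim that drives Steps~2--3. You assume that a welfare-maximizing EFX+PO allocation is envy-freeable, so that any welfare-improving bundle permutation could be ``repaired'' into a better EFX+PO allocation. This is false.

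Take goods $a,b,c,d$ with lexicographic values, listed in the order $(a,b,c,d)$:
$v_1=(8,4,2,1)$, $v_2=(16,2,4,1)$, $v_3=(1,20,2,8)$.

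First, agent~3 cannot hold $a$ in an EFX+PO allocation. Its bundle would have to be $\{a\}$, the holder of $b$ would then have to hold $\{b\}$, and swapping these two singletons helps both agents. Whoever of agents $1,2$ holds $a$ must hold exactly $\{a\}$. The singleton characterization then leaves exactly four EFX+PO allocations:
\begin{itemize}
\item $(\{a\},\{c,d\},\{b\})$, welfare $33$;
\item $(\{a\},\{c\},\{b,d\})$, welfare $40$;
\item $(\{c,d\},\{a\},\{b\})$, welfare $39$;
\item $(\{b\},\{a\},\{c,d\})$, welfare $30$.
\end{itemize}
Your candidate is $(\{a\},\{c\},\{b,d\})$. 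Swapping its first two bundles gives welfare $2+16+28=46$; equivalently, the cycle $1\to2\to1$ has weight $(2-8)+(16-4)=6>0$. No EFX+PO allocation has welfare above $40$, so Step~3 cannot succeed. The envy-freeable EFX+PO allocation in this instance is $(\{c,d\},\{a\},\{b\})$, which is not welfare-maximal within the EFX+PO class.

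\textbf{The fallback is not yet an argument, and one of its premises is false.} You claim that every directed $3$-cycle carries at most one envy edge. This does not follow from Step~1 and PO. In the same instance, $(\{b\},\{a\},\{c,d\})$ is EFX and PO (picking sequence $2,1,3,3$). Its cycle $3\to1\to2\to3$ contains the two envy edges $3\to1$ and $1\to2$, and has weight $10+4-11=3>0$.

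Your rule of giving a contested singleton to the agent with the largest scaled value for it also compares the wrong quantities. What controls the relevant cycles is a difference such as $v_i(a)-v_i(X)$, where $X$ is the bundle the agent would otherwise receive. The paper, for instance, chooses $s$ minimizing $v_i(a)-v_i(b_i)$.

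\textbf{The missing idea.} Stop optimizing over the whole EFX+PO class. Instead, fix a specific three-bundle partition dictated by the favorite-good structure, and take the welfare-maximizing permutation of that fixed partition:
\begin{itemize}
\item If the favorites are distinct, give each agent its favorite; this is already envy-free.
\item If two agents share favorite $a$ and $b$ is the third agent's favorite, or if all agents share $a$ and two of them share second favorite $b$, use $\{a\},\{b\},R$.
\item If all agents share $a$ and have distinct second favorites $b_i$, use $\{a\},\{b_s\},R$.
\end{itemize}
Envy-freeability then comes for free from condition~(iii). What remains is to check that this particular permutation is EFX and sequencible (hence PO), which a short case analysis handles.
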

\begin{proof}
For lexicographic valuations, it is known that an allocation $A$ is EFX if and only if every envied agent receives exactly one good~\citep{HSV+21fair}. Moreover, allocation $A$ is PO if and only if there is a picking sequence of agents that generates that allocation. We use these two characterizations in our proof.

We assume $|M|>3$ throughout. Indeed, when $|M|\leq3$, we can add dummy goods to make $|M|=3$ exactly. Let $A$ be a maximum weight matching between agents and goods. From \Cref{thm:efable-char}, $A$ is envy-freeable. It is also EFX as every agent has at most one good. Furthermore, if $A$ is Pareto dominated, then every agent must get a good it strictly prefers over its assignment in $A$. This would contradict welfare maximality of $A$, implying it is also PO. 

Let $N=\{1,2,3\}$ denote the set of agents. 
Let $F$ denote the set of favorite goods of the agents, and let $k \coloneqq |F| \geq 1$ denote the number of \emph{distinct} favorite goods of the agents. We prove the claim via case analysis on $k$.

    First, suppose $k=3$. In this case, we give each agent $i \in [3]$ its favorite good and all remaining goods to some arbitrary agent (say, agent 3). This results in an envy-free allocation and hence, is EFX and envy-freeable. The allocation is clearly PO as it can be simulated by the picking sequence $(1,2,3,\dots,3)$.
    
    Suppose $k=2$. Let us relabel the agents such that agents $1$ and $2$ have the common favorite good $a$ and agent 3's favorite good is $b \neq a$. Let $R= M\setminus \{a,b\}$. Let $A$ be the utilitarian welfare-maximizing allocation (equivalently, a permutation) of the bundles $\{a\},\{b\},$ and $R$. By \Cref{thm:efable-char}, $A$ is envy-freeable. We claim that $A$ is also EFX and PO.
    Note that agent $3$ cannot receive good $a$, as swapping $a$ and $b$ would strictly improve both agents. Let $i \in \{1,2\}$ receive $a$ and let $j$ be the other agent in that pair. If agent $3$ receives $b$, then the allocation is sequencible (hence PO) and EFX. Otherwise, agent $j$ receives $b$ and agent $3$ receives $R$. Welfare maximality gives $v_j(b) - v_j(R) \geq v_3(b) - v_3(R) > 0$. Thus, $b$ is agent $j$'s favorite good outside of $a$ and therefore the allocation is again sequencible and EFX.

        Let us now consider the case $k=1$. Let $a$ denote the common favorite good, and let $b_i$ be agent $i$'s favorite good in $M\setminus\{a\}$. 
        
        Consider the case when not all $b_i$'s are distinct. We relabel the agents such that $b_1=b_2=b$. Let $A$ be the utilitarian welfare-maximizing allocation of the bundles $\{a\},\{b\},R$ where $R = M\setminus \{a,b\}$. As before, $A$ is envy-freeable. Let $h_a,h_b,h_R$ denote the agent assigned $\{a\},\{b\},$ and $R$ in $A$, respectively. If $A$ is not EFX, then $h_b$ envies $R$. This implies $h_b$ is agent $3$ since agents $1$ and $2$ prefer $b$ to $R$. But then, as $b_3 \in R$, swapping $R$ and $b$ strictly increases welfare, contradicting the optimality of $A$. Hence, $A$ is EFX. Also note that $A$ is sequencible (hence PO) as it can be simulated via the picking sequence $h_a, h_b, h_R,\dots, h_R.$
        
        If all three $b_i$'s are distinct, choose $s \in \arg\min_{i\in[3]} \{v_i(a)-v_i(b_i)\}$. Let $A$ be the welfare-maximizing allocation of the bundles $\{a\},\{b_s\},\{R\}$. Suppose $A$ is not EFX. This forces $s\neq h_{b_s}$ and $h_{b_s}$ prefers $R$ to $\{b_s\}$. If $h_R = s$, swapping $R$ and $\{b_s\}$ strictly improves the total welfare. If $h_a = s$, then we can swap bundles $\{b_s\}$ and $\{a\}$. The total welfare strictly improves by our choice of $s$. Hence, both scenarios contradict the optimality of $A$. Hence, $A$ is EFX. To show PO, observe that $A$ can be simulated via the picking sequence $h_a, s, h_R,\dots , h_R$.

Thus, an allocation that is envy-freeable, EFX, and PO always exists for $n=3$ agents. From~\Cref{thm:bobw-to-subsidy}, it follows that ex-ante EF and ex-post EFX+PO lottery always exists for three lexicographic agents.
\end{proof}

\begin{corollary}
Let $P$ be a scale-free property. If profile $v$ admits no ex-ante EF and ex-post $P$ lottery, then there exists some $c\in\R_{>0}^n$ for which no $P$-allocation is envy-freeable at $c\odot v$.
\label{cor:impossibility}
\end{corollary}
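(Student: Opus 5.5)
This is the contrapositive of \Cref{thm:bobw-to-subsidy}, so I plan to obtain it by a direct appeal to that theorem rather than by reopening the LP duality argument.

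Suppose the profile $v$ admits no ex-ante envy-free and ex-post $P$ lottery, and assume for contradiction that for every $c \in \Rpp^n$ some $P$-allocation is envy-freeable at $c \odot v$. This assumption is precisely the hypothesis of \Cref{thm:bobw-to-subsidy}, taken verbatim and with the same quantifier order: for all $c$ there exists $A$. That theorem then yields an ex-ante envy-free lottery supported on $\F_P(v)$, which contradicts our assumption. Hence some $c \in \Rpp^n$ must exist at which no $P$-allocation is envy-freeable.

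There is one bookkeeping point to make explicit. \Cref{thm:bobw-to-subsidy} phrases its hypothesis as ``admits an allocation that is $P$ and envy-freeable'' at $c\odot v$. Whether $P$ is read at $c\odot v$ or at $v$ does not matter, because $P$ is scale-free and so $\F_P(c\odot v)=\F_P(v)$. I will note this equality so that the phrase ``$P$-allocation'' in the corollary is unambiguous.

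I expect no genuine obstacle here. The only subtle case is $\F_P(v)=\emptyset$. In that case the conclusion holds trivially for every $c$, for instance $c=\one$, since there is no $P$-allocation to be envy-freeable. The contrapositive argument above also covers this case automatically, so no separate treatment is needed.
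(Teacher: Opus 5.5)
Your proposal is correct and matches the paper, which likewise presents this corollary simply as the contrapositive of \Cref{thm:bobw-to-subsidy}. Your remarks on reading $P$ at $c\odot v$ versus $v$ (via $\F_P(c\odot v)=\F_P(v)$) and on the degenerate case $\F_P(v)=\emptyset$ are accurate, though the theorem already handles both.
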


The above corollary is the contrapositive of \Cref{thm:bobw-to-subsidy}. We next present two impossibilities that follow as direct consequences. 

\begin{corollary}\label{impossible:EF1+fPO+Efable}
There exists an additive goods instance where no allocation simultaneously satisfies EF1, fPO, and envy-freeability.

\end{corollary}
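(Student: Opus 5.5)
The plan is to use \Cref{cor:impossibility} with $P = \EFone + \fPO$. Both EF1 and fPO are scale-free on additive profiles, and the additive goods domain is scale-free, so the corollary applies. It therefore suffices to exhibit an additive goods profile $v$ that admits no lottery which is ex-ante EF and ex-post EF1+fPO. Given such a $v$, \Cref{cor:impossibility} yields a scaling $c \in \Rpp^n$ at which no EF1+fPO allocation is envy-freeable. The rescaled profile $c \odot v$ is again an additive goods instance, and by scale-freeness its EF1+fPO allocations are exactly those of $v$. So $c\odot v$ is the desired instance.

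For the BoBW impossibility I would first look for a known result in the literature. The natural candidate is the negative result of \citet{Aziz26po}, who showed that ex-ante EF together with ex-post EFX$_{+}$+PO fails for three additive agents. That concerns EFX rather than EF1, though. I would instead search for, or construct, a small instance where ex-ante EF is incompatible with ex-post EF1+fPO; I recall such an incompatibility is known in the BoBW literature, e.g.\ from \citet{AFS+24best}. If I have to construct it myself, I would use the following mechanism. Under fPO, every allocation in the support is supported by a common structure: each one maximizes some positively weighted welfare. With a degenerate-looking profile, for example two agents with identical valuations and a third agent who values goods in a sharply different ratio, fPO forces the third agent to always receive particular goods. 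EF1 then limits how much the identical agents can receive. Combining these, one agent's expected utility falls strictly below its expected value of some other agent's bundle.

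Concretely, I would set up a few goods with carefully chosen values and enumerate the EF1+fPO allocations. fPO can be checked through the absence of improving cycles of value ratios $v_i(g)/v_j(g)$. I would then certify infeasibility of the primal LP from \Cref{lem:weighted-dual-separation}. The certificate is an explicit dual vector $\lambda \ge 0$ with $\lambda\cdot d(A)<0$ for every such allocation $A$. An alternative that avoids the detour through BoBW is to verify directly, using \Cref{thm:efable-char}(ii), that every EF1+fPO allocation has a positive envy cycle, possibly after a rescaling $c$ that I would choose as $c_i$ read off from the dual decomposition of \Cref{lem:decomposition}.

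The main obstacle is finding the instance and completely enumerating its EF1+fPO allocations. Once that enumeration is done, producing a dual certificate, or directly exhibiting a positive envy cycle for each allocation, is a finite check. I would try first to keep the instance to three agents and very few goods so that the enumeration stays manageable.
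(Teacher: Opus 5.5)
Your proposal is correct and matches the paper's proof: take the additive goods instance of \citet{AFS+24best} that admits no ex-ante EF, ex-post EF1+fPO lottery, and apply \Cref{cor:impossibility} with $P=\EFone+\fPO$, which is scale-free on additive profiles, noting that $c\odot v$ is again an additive goods instance. The self-construction fallback you sketch is unnecessary, and as written it is only heuristic, since the cited impossibility already supplies the required profile.
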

\begin{proof}
\citet{AFS+24best} provided an additive goods instance where no lottery is ex-ante EF and ex-post EF1 and fPO. Both EF1 and fPO are scale-free for the domain of additive valuations. Therefore, by \Cref{cor:impossibility}, there exists some scaling of the given instance where no allocation is simultaneously EF1, fPO and envy-freeable.     
\end{proof}
\vspace{-0.2cm}
\begin{corollary}
There exists an additive chores instance such that no allocation is WEF1 and weighted envy-freeable.
\label{impossible: WEF1 and EFable chores}
\end{corollary}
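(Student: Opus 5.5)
The plan is to proceed exactly as in \Cref{impossible:EF1+fPO+Efable}, this time applying the contrapositive of the weighted reduction (\Cref{thm:weighted}). The paper's introduction indicates the source: \citet{LWZ26binary} give a BoBW impossibility for additive chores with unequal entitlements. Concretely, they exhibit an additive chores instance $v$ with positive entitlements $w$ that admits no lottery which is both ex-ante weighted envy-free and ex-post WEF1.

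\textbf{Step 1: scale-freeness of WEF1.} I would first check that WEF1 is invariant under positive agent-specific rescaling on the additive chores domain. The WEF1 condition for the pair $(i,j)$ compares $v_i(A_i\setminus\{g\})/w_i$ with $v_i(A_j)/w_j$, where $g$ is a negatively valued item removed from $i$'s own bundle. Multiplying $v_i$ by $c_i>0$ scales both sides of this comparison by $c_i$, so the inequality is unchanged. The sign of each item as seen by agent $i$ is also unchanged, so the set of removable items does not change. Hence $\F_{\mathrm{WEF1}}(c\odot v)=\F_{\mathrm{WEF1}}(v)$. The additive chores domain is likewise closed under positive rescaling.

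\textbf{Step 2: contrapositive.} Suppose, for contradiction, that every additive chores instance admits a WEF1 and weighted envy-freeable allocation. In particular, every scaled profile $c\odot v$ of the impossibility instance would have one, since each $c\odot v$ is again an additive chores instance with the same entitlements $w$. By \Cref{thm:weighted}, with $P=\mathrm{WEF1}$ (which is scale-free by Step 1), $v$ would then admit an ex-ante WEF lottery supported on WEF1 allocations. This contradicts \citet{LWZ26binary}. Therefore some scaling $c\odot v$ is an additive chores instance in which no allocation is both WEF1 and weighted envy-freeable, which proves the claim.

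\textbf{Main obstacle.} The main point needing care is that the cited impossibility is stated for exactly these notions. Their ex-ante notion must be weighted envy-freeness with the comparison $v_i(A_i)/w_i\ge v_i(A_j)/w_j$ taken in expectation, and their ex-post notion must be WEF1 in the sense defined in the preliminaries, not some weaker variant. If their WEF1 is weaker than ours, the argument still goes through: an impossibility for a weaker ex-post property implies the impossibility for ours, because our WEF1 allocations are a subset of theirs. The scale-freeness check is routine.
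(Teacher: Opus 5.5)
Your proposal is correct and follows the paper's proof: take the \citet{LWZ26binary} instance that admits no ex-ante WEF and ex-post WEF1 lottery, and apply the contrapositive of the reduction to obtain a rescaled additive chores instance with no allocation that is both WEF1 and weighted envy-freeable. The paper cites \Cref{cor:impossibility}, which is stated for unweighted envy-freeability, so your appeal to the contrapositive of \Cref{thm:weighted} is the more precise reference; your check that WEF1 is scale-free fills in a detail the paper's one-line proof leaves implicit.
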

\begin{proof}
\citet{LWZ26binary} showed that ex-ante weighted envy-freeness and ex-post WEF1 can be incompatible even for two agents with identical binary chore costs. \Cref{cor:impossibility} gives us the required result. 
\end{proof}
\begin{corollary}
There exists an additive goods instance such that no allocation is WEF1 and weighted envy-freeable.
\label{impossible: WEF1 and EFable goods}
\end{corollary}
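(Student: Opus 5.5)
The plan is to argue via \Cref{cor:impossibility}, applied in its weighted form through \Cref{thm:weighted}, exactly as in the proof of \Cref{impossible: WEF1 and EFable chores}. Since that contrapositive holds for any property invariant under agent-specific rescaling, the only real input needed is a known additive goods instance with fixed unequal entitlements that has no ex-ante WEF and ex-post WEF1 lottery. Such an instance is supplied by \citet{AGM23BoBW}, who showed that ex-ante WEF and ex-post WEF1 are incompatible for additive goods.

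The steps are as follows.

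\textbf{Step 1 (scale invariance).} Check that WEF1 is invariant under positive agent-specific rescaling $c\odot v$. The WEF1 condition for agent $i$ compares only $v_i(\cdot)/w_i$ with $v_i(\cdot)/w_j$. Both sides, for any removed good, get multiplied by the same positive $c_i$, so the condition is unaffected. The additive goods domain is also preserved under such rescaling, so the scaled profiles remain additive goods instances.

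\textbf{Step 2 (contrapositive).} Take the instance $v$ of \citet{AGM23BoBW} with its entitlements $w$. By the contrapositive of the first part of \Cref{thm:weighted}, there is some $c\in\Rpp^n$ such that no WEF1 allocation at $c\odot v$ is weighted envy-freeable.

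\textbf{Step 3 (conclude).} By Step 1, $c\odot v$ is itself an additive goods instance with the same entitlements, and it has the same set of WEF1 allocations as $v$. Hence no allocation of $c\odot v$ is both WEF1 and weighted envy-freeable, which is the claim.

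The main obstacle is confirming that the cited impossibility really concerns exact ex-ante WEF together with ex-post WEF1 for additive goods, with the same WEF1 definition used in the paper. If their instance uses a variant definition, I would instead verify scale-invariance directly for that variant, or fall back on a two-agent instance and invoke the two-agent converse in \Cref{thm:weighted}.
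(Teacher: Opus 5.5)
Your proposal is correct and follows essentially the paper's argument: take the known additive-goods incompatibility of ex-ante WEF with ex-post WEF1 (\citet{AGM23BoBW}, also \citet{HSV24entitlements}) and apply the contrapositive of the reduction together with scale-invariance of WEF1; citing the weighted version \Cref{thm:weighted} is more precise than the paper's unweighted \Cref{cor:impossibility}. One fix to your fallback: the two-agent converse in \Cref{thm:weighted} goes from lotteries to allocations and cannot produce an impossibility, but you do not need it, since the contrapositive of the forward reduction holds for every $n$.
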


\begin{proof}
\citet{HSV24entitlements} and~\citet{AGM23BoBW} showed that ex-ante weighted envy-freeness and ex-post WEF1 can be incompatible. The required result then follows from~\Cref{cor:impossibility}. This recovers the non-existence guarantee of \citet{AHKS+25Subsidy}. 
\end{proof}

\section{Limits of the Reduction}
\label{sec:Limits}

\subsection{A submodular example with three agents}

\begin{theorem}[Submodular separation] %
There is an instance with four goods and three agents having monotone submodular valuations such that no allocation is simultaneously EF1 and envy-freeable, yet an ex-ante EF and ex-post EF1 lottery exists.
\label{thm:counterexample-submodular-valuations}
\end{theorem}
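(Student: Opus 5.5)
The proof has two parts, and the counterexample is built before either. The instance has agents $N=\{1,2,3\}$ and goods $M=\{g_1,g_2,g_3,g_4\}$. The valuations are coverage-style monotone submodular functions, for example a budget-additive function $v_i(S)=\min\{\sum_{g\in S}a_{ig},\,\beta_i\}$ or a small weighted coverage function. This choice makes submodularity and monotonicity immediate. The design goal is that in every EF1 allocation, the envy graph contains a positive-weight directed cycle. By \Cref{thm:efable-char}(iii), this is the same as saying some permutation of the bundles strictly increases utilitarian welfare. The intuition is that EF1 with four goods and three agents forces a $2$-$1$-$1$ bundle-size profile, up to degenerate cases with empty bundles, which EF1 typically rules out. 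Submodularity lets the agent holding the pair value it only slightly above a singleton, while another agent values that pair highly. Giving the pair to its ``correct'' owner then violates EF1 for someone else.

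\textbf{Step 1 (no EF1 + envy-freeable allocation).} I would enumerate allocations up to the symmetries of the instance, choosing the instance with enough symmetry to make this manageable. First I would discard allocations with an empty bundle: the agent with the empty bundle envies a bundle of size at least $2$ even after removing one good, provided all pairs are valued strictly above all singletons. For each remaining $2$-$1$-$1$ allocation, I would check EF1 directly. For each EF1 allocation, I would exhibit a specific bundle permutation, either a transposition or a $3$-cycle, that raises welfare. By \Cref{thm:efable-char}, this shows the allocation is not envy-freeable. Organizing the check by which pair forms the doubleton keeps it to at most $\binom42\cdot 3!$ cases, which reduce further under symmetry.

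\textbf{Step 2 (a BoBW lottery exists).} I would exhibit an explicit lottery over a few EF1 allocations, likely the cyclic rotations of one EF1 allocation or a symmetric family of them, with probabilities chosen so that each agent's expected value of its own bundle is at least its expected value of every other agent's bundle. Verifying this means checking six linear inequalities. If the symmetric choice does not work, I would solve the primal LP from \Cref{lem:weighted-dual-separation} restricted to $\F_{\EFone}(v)$ and confirm $t^\star\ge 0$. By \Cref{cor:small-support}, the support needs at most $7$ allocations.

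\textbf{Main obstacle.} The hard part is tuning the numbers so that both parts hold at once. By \Cref{thm:bobw-to-subsidy} and \Cref{cor:impossibility}, Step 1 only needs to fail at the original scale $c=\one$, not at every rescaling, so there is slack. The instance must make every EF1 allocation cycle-improvable while still leaving EF1 allocations whose envies average out across agents. My first attempt would be a symmetric (cyclic) instance. There, each individual EF1 allocation has an improving cycle that benefits the ``next'' agent, yet averaging over rotations equalizes the expected values, giving ex-ante EF by symmetry. I would then perturb the values if needed to break any welfare ties that leave an allocation envy-freeable.
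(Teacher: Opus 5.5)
\textbf{There is a genuine gap: the proposal never exhibits an instance.} What you have written is a plan for searching for a counterexample, not a proof. The theorem is existential, so its entire content is an explicit instance plus the finite verification. You never fix any valuations (only ``for example a budget-additive function''), so neither Step~1 nor Step~2 is carried out, and nothing in the proposal shows that numbers with both properties exist.

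This is not a formality. For additive goods an allocation that is EF1 and envy-freeable always exists, so the instance must exploit non-additivity in a specific way, and finding it is the whole difficulty. For comparison, the paper makes the following choices:
\begin{itemize}
\item It takes $t=(5,8,11,0)$ and $u=(9,1,0,10)$ on $(a,b,c,d)$ and sets $v_1(S)=\min\{11,t(S)\}+\min\{10,u(S)\}$.
\item It obtains $v_2,v_3$ by rotating $a\to b\to c\to a$ with $d$ fixed.
\item Symmetry then leaves only the partitions $\{ac,b,d\}$ and $\{ad,b,c\}$. Their unique welfare-maximizing permutations $(d,b,ac)$ and $(b,ad,c)$ are not EF1.
\item The uniform lottery over the three rotations of $(ac,b,d)$ is ex-ante EF.
\end{itemize}

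\textbf{The heuristic behind Step~2 is also incorrect as stated.} Cyclic symmetry does not by itself give ex-ante EF. Let $A'$ be the rotation of $A$: agent $i+1$ receives $\rho(A_i)$, indices mod $3$, where $\rho$ is the good rotation underlying the symmetry. Then $d_{i+1,j+1}(A')=d_{ij}(A)$. So the uniform lottery over the rotations of $A$ has $\mathbb{E}[d_{12}]=\mathbb{E}[d_{23}]=\mathbb{E}[d_{31}]=\tfrac13(d_{12}+d_{23}+d_{31})(A)$ and $\mathbb{E}[d_{13}]=\mathbb{E}[d_{32}]=\mathbb{E}[d_{21}]=\tfrac13(d_{13}+d_{32}+d_{21})(A)$.

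Hence the lottery is ex-ante EF exactly when both directed $3$-cycles in the envy graph of $A$ have nonpositive weight. Since $A$ must also fail to be envy-freeable, its positive-weight cycle must be a $2$-cycle. In other words, the welfare-improving permutation of the base allocation has to be a transposition; an improving $3$-cycle would destroy the rotation lottery.

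In the paper's instance, $(ac,b,d)$ has $3$-cycle negative-envy sums $4$ and $11$, while swapping the bundles of agents $1$ and $3$ raises welfare from $44$ to $45$. The numbers must be tuned to satisfy this constraint, not merely to break ties, and your sketch does not identify it.
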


\begin{proof} 
Let $M = \{a,b,c,d\}$ denote the set of goods, and let $t$ and $u$ be additive functions with item values of $(a,b,c,d)$ given by $t = (5,8,11,0)$ and $u = (9,1,0,10)$. 

Define agent $1$'s valuation function as $v_1(S) \coloneqq \min\{11,t(S)\}+\min\{10,u(S)\}$. Construct $v_2(S)$ and $v_3(S)$ by cyclically rotating $a,b,c$ while keeping $d$ fixed. That is, if $\rho$ denotes $a\to b\to c\to a$, then $v_i(S) \coloneqq v_1(\rho^{1-i}(S))$ for $i \in \{2,3\}$. Let us write $ac$ to denote $\{a,c\}$. Then, the relevant values are
\begin{center}
\begin{tabular}{c@{\qquad}rrrr@{\qquad}rr}
 & $a$ & $b$ & $c$ & $d$ & $ac$ & $ad$ \\
\midrule
$v_1$ & 14 & 9 & 11 & 10 & 20 & 15 \\
$v_2$ & 11 & 14 & 9 & 10 & 12 & 21 \\
$v_3$ & 9 & 11 & 14 & 10 & 21 & 18 \\
\end{tabular}
\end{center}

Observe that each $v_i$ is the sum of two budget-additive functions, and is therefore submodular \citep{GHMbudget}.

\emph{No EF1+envy-freeable allocation.} All singleton values in our construction are positive. Therefore, any EF1 allocation must have bundle sizes $(2,1,1)$ up to reordering. The pair bundle in any EF1 partition can contain $d$ or not. By cyclic symmetry, we only need to consider the two unordered partitions $\{ac,b,d\}$ and $\{ad,b,c\}$.

From~\Cref{thm:efable-char}, we know that any envy-freeable allocation maximizes welfare among permutations of its bundles. It can be checked that for the partition $\{ac,b,d\}$, the unique welfare-maximizing permutation is $(d,b,ac)$, while that for $\{ad,b,c\}$ is $(b,ad,c)$. Neither of these allocations is EF1, giving us the claimed impossibility.

\emph{A BoBW lottery.} Consider a uniform distribution over the following allocations: $(ac,b,d)$, $(d,ab,c)$, and $(a,d,bc)$. These allocations are cyclic rotations of one another. Additionally, it can be easily checked that the first one is EF1, and therefore, by cyclic symmetry, all are. The expected negative envy matrix is
$$\left(\mathbb E[v_i(A_i)-v_i(A_j)]\right)_{i,j = 1}^3 = \frac{1}{3}
 \begin{pmatrix}
 0 & 4 & 11\\
 11 & 0 & 4\\
 4 & 11 & 0
 \end{pmatrix}.
$$
Since every off-diagonal entry is positive, ex-ante envy-freeness follows.
\end{proof}

\subsection{A lexicographic example with four agents}

We now present an example with lexicographic valuations where the converse of the reduction does not hold.

\begin{theorem}[EFX + Envy-freeability counterexample] \label{thm:counterexample-lexicographic-four}
There is an instance with five goods and four agents with lexicographic valuations such that no allocation is simultaneously EFX and envy-freeable, yet there exists an ex-ante EF and ex-post EFX+PO lottery.
\end{theorem}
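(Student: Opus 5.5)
The plan is to exhibit an explicit instance and verify both halves by finite case analysis, as in \Cref{thm:counterexample-submodular-valuations}. I take goods $M=\{a,b,c,d,e\}$ and give each agent an additive valuation with powers of two, $v_i(g)=2^{5-\mathrm{rank}_i(g)}$, so that the lexicographic condition $v_i(g_r)>\sum_{s>r}v_i(g_s)$ holds automatically. Envy-freeability is not scale-free, but powers of two fix concrete numbers, and I can adjust them later (still lexicographic) if a welfare comparison needs tilting. I will look for a construction with cyclic symmetry among agents $1,2,3$ and a distinguished agent $4$, mirroring the submodular example. A natural starting point is a common top good $a$, with agents $1,2,3$ having cyclically rotated second choices among $b,c,d$ and $e$ placed strategically, so that each agent's second choice is contested by a neighbour.

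\emph{No EFX+envy-freeable allocation.} I use the characterization of \citet{HSV+21fair}: for lexicographic goods an allocation is EFX iff every envied agent holds exactly one good. Since five goods go to four agents, some agent holds at least two goods and must be unenvied. Moreover, if $a$ is everyone's favourite, the holder of $a$ is envied and so holds only $\{a\}$. This leaves few partition shapes, namely $(2,1,1,1)$, $(3,1,1,0)$ and $(2,2,1,0)$, and up to the cyclic symmetry only a handful of unordered partitions. For each partition I compute the welfare-maximizing assignment of its bundles, which by \Cref{thm:efable-char}(iii) is the only candidate for an envy-freeable allocation on that partition. I then show that this assignment always places a multi-good bundle with an agent who envies it, or who is envied by someone. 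The design goal is that whichever agent values the leftover pair most (in the weighted sense) is exactly an agent whose top-ranked remaining good lies in another agent's singleton.

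\emph{BoBW lottery.} I take a uniform lottery over the three cyclic rotations of one well-chosen allocation, possibly mixed with a fourth allocation to handle agent $4$. For each allocation I check EFX via the ``envied agents hold one good'' test, and PO by exhibiting a picking sequence that generates it, which is the PO characterization used in \Cref{thm:lex-three}. I then compute the $4\times 4$ expected negative-envy matrix and check that all off-diagonal entries are nonnegative. By symmetry, only the rows of agents $1$ and $4$ need checking.

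The main obstacle is the tuning of the instance itself. The impossibility requires the welfare-optimal permutation to be ``unfair'' for every EFX partition, while the lottery requires ex-ante envy-freeness, especially for agent $4$. By \Cref{thm:bobw-to-subsidy}, this is only possible because the bad permutations depend on the particular scale. I would first try the common-top-good design and brute-force the small list of partitions by hand. If some partition admits an EFX envy-freeable assignment, I will perturb agent $4$'s ranking of $e$, or rescale one agent's powers of two, to break that welfare tie.
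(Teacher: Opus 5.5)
\textbf{There is a genuine gap: no instance is given.} The statement is existential, so its proof \emph{is} an explicit instance plus a finite verification, and your proposal supplies neither. You describe a search heuristic (common top good, cyclic symmetry among agents $1,2,3$, powers of two, perturb if some partition survives). But you never fix a valuation table, compute a welfare-maximizing permutation, or write down a lottery and its envy matrix. You also give no argument that the search must succeed. The step you flag as ``the main obstacle'' is the whole content of the theorem.

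\textbf{The tuning heuristic points the wrong way.} By the characterization of \citet{HSV+21fair}, EFX fails only if the holder of the pair is \emph{envied}. Sending the pair to an agent whose top remaining good sits in another agent's singleton only makes the pair holder envious, which EFX allows. What you need, for every EFX-shaped partition, is that every welfare-maximizing permutation gives some singleton to an agent who ranks a good of the pair above it, while the pair goes to an agent with a larger cardinal gain.

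\textbf{Smaller slips.}
\begin{itemize}
\item Lexicographic values are strictly positive, so an agent with an empty bundle envies every nonempty bundle. Hence only the shape $(2,1,1,1)$ can be EFX; your other two shapes never are.
\item When welfare maximizers tie, all of them must be checked.
\item A lexicographic $v_4$ cannot be invariant under rotating $b,c,d$, so your symmetry does not reduce the impossibility check. With only $a$ forced to be a singleton, all six splits of $\{b,c,d,e\}$ must be ruled out separately.
\item On the lottery side, your rotations fix both $a$ and agent $4$. So agent $4$ holds $a$ either always, in which case agents $1$--$3$ envy agent $4$ ex ante, or never, which forces $\mathbb{E}[v_4(A_4)] \geq v_4(M)/4$.
\end{itemize}

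\textbf{How the paper avoids this.} It uses two \emph{different} contested favourites: agents $1,2$ rank $a$ first and agents $3,4$ rank $b$ first. Every EFX allocation then has $\{a\}$ and $\{b\}$ as singletons and is determined by which of $c,d,e$ is the third singleton, leaving only three partitions. The values on $(a,b,c,d,e)$ are $(26,1,8,2,12)$, $(16,8,2,4,1)$, $(8,16,4,1,2)$ and $(2,28,1,12,8)$; agents $1$ and $4$ are deliberately not powers of two. With these values, each partition has a unique welfare-maximizing permutation in which agent $2$ or $3$ holds a singleton ranked below a good of the pair. The lottery is the uniform mixture of $(\{a\},\{d,e\},\{c\},\{b\})$ and $(\{c,e\},\{a\},\{b\},\{d\})$, produced by the picking sequences $(1,4,3,2)$ and $(2,3,4,1)$. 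Its expected negative-envy matrix has positive off-diagonal entries.

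Your verification toolkit (the EFX characterization, \Cref{thm:efable-char}(iii), picking sequences, the envy matrix) is the right one. What is missing is an instance on which it actually succeeds.
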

\begin{proof}
Consider the following instance with four agents $N=\{1,2,3,4\}$ and five goods $M=\{a,b,c,d,e\}$:
\begin{table}[h]
\centering
    \begin{tabular}{c|ccccc}
        & $a$ & $b$ & $c$ & $d$ & $e$ \\  
        \cmidrule{1-6}
        $1$ & $26$ & $1$ & $8$ & $2$ & $12$\\
        $2$ & $16$ & $8$ & $2$ & $4$ & $1$ \\
        $3$ & $8$ & $16$ & $4$ & $1$ & $2$ \\
        $4$ & $2$ & $28$ & $1$ & $12$ & $8$
    \end{tabular}
\end{table}

Observe that $a$ is the favorite good of both agents $1$ and $2$. Thus, in any allocation, the bundle containing $a$ is envied by at least one of these two agents. The same holds for good $b$ and agents $3$ and $4$. For lexicographic valuations, an allocation is EFX if and only if every envied agent receives exactly one good~\citep{HSV+21fair}. All item values are positive, so an EFX allocation has bundle sizes $(2,1,1,1)$. The goods $a$ and $b$ must be singleton bundles, since each is the favorite of two agents. Thus, the underlying partition of any EFX allocation is one of the following:
    
\begin{itemize}[leftmargin=*]
    \item $\{a\},\{b\},\{c\}, \{d,e\}$: The allocation $A^{(c)}=\left(\{a\},\{c\},\{b\},\{d,e\}\right)$ is the unique permutation of the bundles that maximizes total welfare. However, $A^{(c)}$ is not EFX as agent $2$ envies agent $4$ after removing $e$. 
        
    \item $\{a\},\{b\},\{d\}, \{c,e\}$: The allocation $A^{(d)}=(\{c,e\},\{a\},\{d\},\{b\})$ is the unique permutation of the bundles maximizing welfare. This allocation is not EFX as agent $3$ envies agent $1$ after removing $e$.
    
    \item $\{a\},\{b\},\{e\}, \{c,d\}$: The allocation $A^{(e)}=(\{a\},\{c,d\},\{e\},\{b\})$ is the unique permutation of the bundles maximizing utilitarian welfare. Once again, $A^{(e)}$ is not EFX as agent $3$ envies agent $2$ after removing $d$.

\end{itemize}

Hence, no allocation is simultaneously envy-freeable and EFX. 

We will now show that the following lottery is ex-ante EF and ex-post EFX and PO:
\begin{align*}
X &= \frac{1}{2}\left(\{a\},\{d,e\},\{c\},\{b\}\right)+\frac{1}{2}\left(\{c,e\},\{a\},\{b\},\{d\}\right) .    
\end{align*}

The two allocations in the support of $X$ are induced by the picking sequences $(1,4,3,2)$ and $(2,3,4,1)$, and are therefore Pareto optimal~\citep{HSV+21fair}. Additionally, since each envied agent has exactly one good, these are also EFX. 
Hence, the lottery $X$ is ex-post EFX+PO. The expected negative envy matrix is:
$$
    \left(\mathbb E[v_i(A_i)-v_i(A_j)]\right)_{i,j=1}^4 = \frac{1}{2}\begin{pmatrix}
    0 & 6 & 37 & 43\\
    2 & 0 & 11 & 9\\
    6 & 9 & 0 & 3\\
    29 & 18 & 11 & 0
\end{pmatrix}.
$$
Since every off-diagonal entry is positive, $X$ is ex-ante envy free.
\end{proof}

\section{Concluding Remarks}

We established a general connection between fairness via subsidies and fairness through randomization. Our result unifies several existing observations and contributes new ones. The converse holds for two agents but fails for three agents, as our example for submodular valuations demonstrate. We also provide the first incompatibility result between EF1 and envy-freeability beyond the additive setting, resolving an open question of~\citet{ALMS25mixed}.

Our work opens up several directions for future work. First, it would be useful to identify \emph{additional scale-free properties and domains} in which positive subsidy results and negative BoBW results can be used to infer the corresponding implications for the other problem. 
Second, it would be useful to develop \emph{computational analogs} of our reduction in which algorithms for the subsidy problem can be transformed into algorithms for BoBW. Finally, it would be interesting to explore if there are any \emph{quantitative} extensions of our result, wherein one can use bounds on subsidy to obtain exact or approximate BoBW guarantees.

\section*{Acknowledgments}
RV acknowledges support from DST INSPIRE grant no. DST/INSPIRE/04/2020/000107, ANRF grant no. CRG/2022/002621, and Mr. D.P. Gupta Chair Professorship.

\bibliographystyle{plainnat} 
\bibliography{References-2}

\appendix
\clearpage
\section{Omitted Material from Section~\ref{sec:Reduction}}
\label{appendix:redn}

\DecompositionLemma*

\begin{proof}
For each $i \in N$, let $r_i \coloneqq \sum_{j \neq i} \lambda_{ij} > 0$. Define an $n\times n$ matrix $Q$ as $Q_{ij} \coloneqq \lambda_{ij} / r_i$ for all $i \neq j$ and $Q_{ii} = 0$ for all $i$. 

The matrix $Q$ is row-stochastic (i.e., every row of $Q$ sums up to one) and every off-diagonal entry is positive, so it defines the transition matrix of a finite irreducible Markov chain. 

We first
give an elementary proof that $Q$ has a coordinatewise positive
stationary probability row vector $\pi$. Consider a probability row vector $\rho = (1/n,\dots,1/n)$ and, for each integer $T \geq 1$, define
$$
  \pi^{(T)} \coloneqq \frac{1}{T} \sum_{t=0}^{T-1} \rho Q^t.
$$
Each $\pi^{(T)}$ is a probability vector. By compactness of the probability simplex, a subsequence converges to a probability vector $\pi$. Since
$$
  \pi^{(T)} Q - \pi^{(T)} = \frac{\rho Q^T - \rho}{T} \longrightarrow 0,
$$
we have $\pi Q = \pi$ and $\sum_i \pi_i = 1$. Moreover, every $\pi_j$ is positive. Indeed, if $\pi_j = 0$, some $i \neq j$ has $\pi_i > 0$, and then $\pi_j = \sum_k \pi_k Q_{kj} \geq \pi_i Q_{ij} > 0$, a contradiction.

Now set $\mu_{ij} = \pi_i Q_{ij}$, and $c_i = \frac{r_i}{\pi_i}$. All these quantities are positive. The row sums of $Q$ and stationarity give $\sum_{j \neq i} \mu_{ij} = \pi_i = \sum_{j \neq i} \mu_{ji}$, so $\mu$ is a circulation and $\sum_{(i,j) \in E} \mu_{ij} = \sum_i \pi_i = 1$. Finally, $c_i \mu_{ij} = r_i Q_{ij} = \lambda_{ij}$, as required.
\end{proof}

\section{PO and envy-freeability}
\label{app: PO and EFable}

In this section, we show that PO and envy-freeability are incomparable, as neither notion implies the other. Let $\varepsilon >0$ be sufficiently small. Consider the following instance with two goods and two agents:
\begin{table}[H]
    \centering
    \begin{tabular}
{c@{\qquad}rrrr@{\qquad}rr}
        & $a$ & $b$ \\  
        \midrule{}
        $1$ & $1$ & $2+\varepsilon$ \\
        $2$ & $0$ & $1$
    \end{tabular}
\end{table}

Consider the singleton bundles $\{a\}$ and $\{b\}$. The allocation $(b,a)$ has utilitarian welfare $2+\varepsilon$ while allocation $(a,b)$ has welfare 2. Using \Cref{thm:efable-char}, we note that allocation $(b,a)$ is envy-freeable. However, it is not PO as the allocation $(\{b,a\},\emptyset)$ dominates it. On the other hand, note that while $(a,b)$ is not envy-freeable, it is still Pareto optimal.

\end{document}